\definecolor{acmilan}{HTML}{b52e2b}
\definecolor{mediumpersianblue}{rgb}{0.0, 0.4, 0.65}
\definecolor{pblue}{rgb}{0.11, 0.22, 0.73}
\newtheorem{theorem}{Theorem}
\newtheorem{lemma}[theorem]{Lemma}
\newtheorem{proposition}[theorem]{Proposition}
\theoremstyle{definition}
\newtheorem{definition}[theorem]{Definition}
\newtheorem{remark}[theorem]{Remark}
\newtheorem{assumption}[theorem]{Assumption}
\newcommand{\BE}{\mathsf{E}}
\newcommand{\BP}{\mathsf{P}}
\newcommand{\BR}{\mathbb{R}}
\newcommand{\BN}{\mathbb{N}}
\newcommand{\mc}[1]{\mathcal{#1}}
\renewcommand{\d}{\mathrm{d}}
\newcommand{\ve}{\varepsilon}
\newcommand{\angbrac}[2]{\langle #1,#2 \rangle}
\newcommand{\mb}[1]{\mathbf{#1}}
\newcommand{\ovl}[1]{\overline{#1}}
\newcommand{\unl}[1]{\underline{#1}}
\numberwithin{equation}{section}
\newcommand{\rn}[1]{\romannumeral #1}
\newcommand{\Rn}[1]{\expandafter\@slowromancap\romannumeral #1@}
\title{Robust Experimentation in the Continuous Time Bandit Problem}
\date{}
\author{Farzad Pourbabaee\footnote{Email: \href{mailto:farzad@berkeley.edu}{farzad@berkeley.edu}}}
\begin{document}
\maketitle
\begin{abstract}
	We study the experimentation dynamics of a decision maker (DM) in a two-armed bandit setup (\cite{bolton1999strategic}), where the agent holds \textit{ambiguous} beliefs regarding the distribution of the return process of one arm and is certain about the other one. The DM entertains \textit{Multiplier preferences} \`a la \cite{hansen2001robust}, thus we frame the decision making environment as a two-player differential game against \textit{nature} in continuous time. We characterize the DM's value function and her optimal experimentation strategy that turns out to follow a cut-off rule with respect to her belief process. The belief threshold for exploring the ambiguous arm is found in closed form and is shown to be increasing with respect to the ambiguity aversion index. We then study the effect of provision of an unambiguous information source about the ambiguous arm. Interestingly, we show that the exploration threshold rises unambiguously as a result of this new information source, thereby leading to more \textit{conservatism}. This analysis also sheds light on the efficient time to reach for an expert opinion.
\end{abstract}

{\let\thefootnote\relax\footnote{414 Evans Hall, University of California, Berkeley, CA 94720}}


{\let\thefootnote\relax\footnote{I would like to thank Robert Anderson, Philipp Strack, Gustavo Manso and Demian Pouzo for the support and guidance over the course of this paper, and I am grateful to Haluk Ergin, Chris Shannon and David Ahn for the valuable comments and suggestions. All remaining errors are mine.}}

\setcounter{footnote}{0}
\noindent {\it JEL classification:} C44; C61; C73; D81

\noindent {\it Keywords:}  Model uncertainty; Dynamic experimentation; Variational preferences; Information valuation; Ambiguous diffusion 
\section{Introduction}
\label{sec: intro}

There are natural cases where the experimentation shall be performed in ambiguous environments, where the distribution of future shocks is unknown. For example, consider a diagnostician who has two treatments for a particular set of symptoms. One is the conventional treatment that has been widely tested and has a known success rate. Alternatively, there is a second treatment that is recently discovered and is due to further study. The diagnostician shall perform a sequence of experiments on patients to figure out the success/failure rate of the new treatment. However, the adversarial effects of the mistreatment on certain types of  patients are fatal, thus the medics must consider the \textit{worst-case} scenario on the patients while evaluating the new treatment. As another case, consider the R\&D example of \cite{weitzman1979optimal}, where the research department of an organization is assigned with the task of selecting one of the two technologies producing the same commodity. The research division holds a prior on the generated saving of each technology, but the observations of each alternative during the experimentation stage is obfuscated by ambiguous sources such as the quality of researchers and managerial biases toward one choice. Therefore, the technology that is selected and sent to the development stage must be robust against these sources, because once developed it will be then used in mass production, thus even minor miscalculations in the research stage can lead to huge losses in the sales stage relative to what could have been possibly achieved.

At the core of our paper is an experimentation process between two projects framed as a two-armed bandit problem. The return rate to one arm is known to be $r$, whereas the return rate of the second arm is a binary random variable $\theta \in \{\unl{\theta},\ovl{\theta}\}$, such that $\unl{\theta} \leq r \leq \ovl{\theta}$. The decision maker (henceforth DM) holds an initial prior $p_0=\BP\left(\theta =\ovl{\theta}\right)$, that can be updated when she invests in the second project and learns its output. At the outset, she has to sequentially choose arms to learn about the unknown return rate while maximizing her net experimentation payoff. Specifically in our model, the observations of the second arm are obfuscated by Wiener process whose distribution, from the perspective of the DM, is unknown and therefore is called the \textit{ambiguous} arm. Central to the agent's decision making problem is her preference for robustness against a candidate set of future shocks' distribution which are concealing the ambiguous arm's return rate. Our investigation of the multiplicity of shocks' distribution is motivated both from the subjective and objective perspectives. Subjectively, the DM might be ambiguity averse and the multiple prior set (for the shock distribution) would be part of her axiomatic utility representation (\cite{gilboa1989maxmin}). Alternatively, the DM might be subject to an experimentation setup where the results are objectively drawn from a family of distributions, and she wants to maintain a form of robustness against this multiplicity; this is along the lines of \textit{model-uncertainty} pioneered by \cite{hansen2001robust} and \cite{hansen2006robust}.

\subsection{Summary of results}
\label{sub:summary_of_results}
We frame the decision making environment in which the DM has Multiplier preference, \`a la \cite{hansen2001robust}, as a two-player continuous time differential game against nature --- second player. The DM's goal is to find an allocation strategy between two arms that maximizes her payoff under the distribution picked by the nature. We express the (first player's) payoff function with respect to two control processes: (\rn{1}) DM's allocation choice process between the two arms, and (\rn{2}) the nature's adversarial choice of underlying distribution. The DM follows the \textit{max-min} strategy, namely at every point in time she chooses her allocation weights between two arms, and then the nature picks the shock distribution that minimizes the DM's continuation payoff. We then characterize the value function (to the DM) as a solution to a certain HJBI (Hamilton-Jacobi-Bellman-Isaac) equation.

In this game, the nature's move, i.e choice of the shock distribution, would have two important impacts (with opposite forces) on the DM. First, it affects the current flow payoff of experimentation, and secondly it distorts the DM's posterior formation and consequently her continuation strategy. In the equilibrium the DM knows the nature's best-response strategy, therefore, when she Bayes-updates her belief about $\theta$, she is no longer concerned about \textit{all} possible distributions of shocks. This gives rise to a unique law of motion for the posterior process, and reduces the HJBI equation to a second order HJB equation. 

We derive a closed-form expression for the DM's value function with respect to her posterior, i.e $v(p)$, and characterize her robust optimal experimentation strategy. It turns out in the equilibrium her strategy follows a cut-off rule with respect to her belief. Specifically, she switches to the safe arm from the ambiguous arm whenever her posterior drops below a certain threshold $\bar{p}$. We also find a closed-form equation for the cut-off value that allows us to perform a number of comparative statics. In particular, the threshold for selecting the ambiguous arm unambiguously rises as the DM's ambiguity aversion index increases.\footnote{The direction of such a response is intuitive, however, the sharp characterization of the threshold via the means of continuous time techniques provides us with the extent of this response.} Also, we establish that the marginal value of receiving \textit{good news} about $\theta$ is increasing, namely $v''(p) \geq 0$.

We then explore the effect of an additional unambiguous information source. In particular, we are interested to know what happens when for e.g the experimentation unit hires an expert to release risky but unambiguous information about $\theta$. The new value function $\tilde{v}(p)$ is obtained in closed-form, and the DM's optimal strategy again turns out to follow a cut-off rule (with a different threshold $\tilde{p}$). Interestingly, we show that under any circumstances, compared to the previous case the value of cut-off rises as a result of the extra information, i.e $\tilde{p} \geq \bar{p}$. Therefore, it is interpreted as though the DM becomes more conservative against choosing the second arm when offered with such information. Lastly, we show the surplus $\tilde{v}(p)-v(p)$ generated by the expert attains its maximum at the range of beliefs where the experimentation unit would otherwise select the ambiguous arm but do not have strong enough feeling and evidence in favor of this decision. Therefore, our model sheds light on the time that is best to reach an expert opinion.

\subsection{Related literature and organization of the paper} 
\label{sub:related_literature}
The literature on robust bandit problem is limited, but recently there have been some attempts to bring several aspects of robustness into play. Specifically in the works done by \cite{caro2013robust} and \cite{kim2015robust} the discrete-time multi-armed bandit problem is studied while the state transition probabilities are drawn from an \textit{ambiguous} set of conditional distributions. In \cite{caro2013robust} the set of multiple transition probabilities at every period is constrained through a relative entropy condition, whereas \cite{kim2015robust} chooses to impose an entropic penalty cost directly in the objective function of the DM rather than hard thresholding it as a constraint. In a different work \cite{li2019k} studies the multi-armed bandit in which the DM entertains max-min utility and follows a prior-by-prior Bayes updating from her initial \textit{rectangular} multiple prior set, where each candidate distribution in this set is identified by the i.i.d shocks it generates in the future. Our work is different from these treatments in the following aspects: (\rn{1}) contrary to the first two works the Brownian diffusion treatment of the Markov transitions allows for a richer set of perturbations around benchmark model which extends the scope of robustness that the DM demands; (\rn{2}) the continuous time framework lets us to obtain sharp and closed form results on the value function and the optimal experimentation policy that in turn renders the comparative static with regard to parameters of the model and importantly the ambiguity aversion index; (\rn{3}) we are explicit about the state variable in our setup, and specifically we characterize it as the DM's posterior process regarding the second arm's return rate; (\rn{4}) our setup is flexible enough that can address distinct informational environments such as the effect of the provision of an expert opinion.

In the economic literature, after the seminal work of \cite{gittins1979bandit}, the continuous time problem of optimal experimentation in a noisy environment, where the payoff to the unexplored arm\footnote{Often the second arm is referred as the \textit{unexplored} one.} is subject to a Brownian motion is studied in \cite{bolton1999strategic} and \cite{keller1999optimal}. Aside from these works, there is a growing literature on experimentation in a multiple agent environment where the free-riding issues arise.\footnote{A nonexhaustive list includes \cite{keller2005strategic}, \cite{heidhues2015strategic} and \cite{bonatti2017learning}.}

Our treatment of robust preferences in continuous time relies heavily on the fundamental works by \cite{hansen2001robust}, \cite{hansen2006robust} and \cite{hansen2011robustness}.\footnote{In a closely related discrete-time framework \cite{epstein2003recursive} and \cite{maccheroni2006dynamic} present recursive utility representation aimed to capture the preference for robustness.} Our paper is also related to the literature studying the effects of robustness and ambiguity in different decision making frameworks such as \cite{riedel2009optimal}, \cite{cheng2013optimal}, \cite{miao2016robust}, \cite{wu2018ambiguity} and \cite{luo2017robustly}. Also, it is related to the relatively understudied topic of learning under ambiguity.\footnote{For example see \cite{marinacci2002learning}, \cite{epstein2007learning} and \cite{epstein2019optimal}.} Finally in a set of experimental works with adopting different notions of ambiguity aversion, it has been tested that the ambiguous arm of the experiment has a lower Gittins index that prompts the DM to undervalue the information from exploration. To name a few we can point to \cite{anderson2012ambiguity} and \cite{meyer1995sequential} in the context of airline choice and \cite{viefers2012should} in the investment choice.

The remainder of the paper is organized as follows. To build intuition, in section \ref{sec: two_period_example} we present some of the forces behind the model in a two-period example. Next, in section \ref{sec: model} the full features of experimentation setup and payoff function are explained in a continuous time framework. In section \ref{sec:dynamic_programming_analysis}, we apply the dynamic programming analysis and present variational characterizations of the value function. Section \ref{sec: valuefunction} offers the closed-form expression for value function, properties of the optimal experimentation strategy, and some comparative static results. In section \ref{sec:value_of_unambiguous_information}, we extend our setup to capture the effect of an additional unambiguous information source. The concluding remarks are presented in section \ref{sec:concluding remarks} and finally the proofs of all results are expressed in the appendix \ref{sec: proofs}.
\section{Two-period example}
\label{sec: two_period_example}
Our goal in this example is to highlight the main trade-offs that the DM and her opponent \textit{nature} face in their dynamic interaction. Let $t \in\{1,2\}$ and at each period the DM allocates her resources between two available choices, namely the safe and the ambiguous project. The time $t$ incremental returns to each arm when she allocates $\mu_t \in [0,1]$ of her resources to the safe (first) arm and $1-\mu_t$ to the ambiguous (second) arm are
\begin{equation}
	\begin{split}
		\Delta y_{1,t}&=(1-\mu_t) r\\
		\Delta y_{2,t}&=\mu_t\theta+\sqrt{\mu_t}\ve_t.
	\end{split}
\end{equation}
In that $r=1$ is the return rate of the safe project, and $\theta \in \{0,2\}$ is the unknown return to the second arm. The DM's prior on this set at period one is given by $p_1 = \BP\left(\theta=2\right)$, which is not subject to any ambiguity. However, at each period the return to the second arm is obfuscated by an independent\footnote{For simplicity assume $\ve_1$, $\ve_2$ and the period one belief on $\theta$ are independent from each other.} Gaussian shock that could possibly be drawn from two distributions, namely for each $t$ the law of $\ve_t$ belongs to the set $\left\{\mc{N}(-0.5,1),\mc{N}(0.5,1)\right\}$.\footnote{This set clearly doesn't satisfy the rectangularity condition nor the convexity property of \cite{gilboa1989maxmin}, however it serves only for expositional purposes.} We take no stance on whether this multiple prior set is the subjective belief of the DM or literally the objective moves that nature takes against the DM. Our solution concept for both cases is the the so-called \textit{max-min}. However, the first situation reflects a decision theoretic choice of an ambiguity averse agent with a subjective multiple prior set, whereas the second interpretation is more in line with the notion of robust decision making. 

The timing of this example is as follows. At the beginning of period one DM chooses $\mu_1$. Then, nature \textit{responds} by picking $h_1 \in \{-0.5,0.5\}$ as the mean of $\ve_1$. The returns to both arms, i.e $\{\Delta y_{1,1},\Delta y_{2,1}\}$ are realized. DM forms the family of beliefs $\{p_2^{h_1}: h_1\}$ at the beginning of period two, and takes the \textit{appropriate} action $\mu_2$. The nature chooses $h_2$ as the mean of second period's shock. Subsequently the game ends and second period's returns are realized.

What happens at the sub-game perfect equilibrium of this game? For this we need to look at the sub-game starting at $t=2$. Regardless of DM's action $\mu_2$, the nature always picks $h_2 = -0.5$, because the game ends at this period and $h_2=-0.5$ is the worst case distribution from the DM's perspective. Because of this triviality of the nature's choice at period two, we drop the index one from $h_1$ and henceforth denote it by $h$, which is the only non-trivial choice of the nature in this example. The DM's posterior beliefs after the realizations of first period returns are
\begin{equation}
\label{eq: p2update}
p_2^h = \left(1+\frac{1-p_1}{p_1}\exp\left\{2\left(\sqrt{\mu_1}h+\mu_1-\Delta y_{2,1}\right)\right\}\right)^{-1}1_{\{\mu_1>0\}}+p_11_{\{\mu_1=0\}}, ~~ h \in \{-0.5,0.5\}.
\end{equation}
It is important to note that the posterior probability is no longer unique, and DM faces a set of posteriors for each choice of nature in period one. Even though that we face a two-player game where the nature's actions are not observable to the DM, but at the equilibrium DM knows the \textit{minimizing} choice of the nature, thereby her family of posteriors effectively reduces to a single posterior induced by the worst case action of the nature say $h^*$. This point becomes more clear as we proceed through the equilibrium analysis. For every member $p_2$ of the posterior set, the DM's optimal action at period two (anticipating that nature will choose $h_2=-0.5$) is $\mu_2(p_2) = 1_{\{2p_2-0.5 > 1\}}$, that leads to the expected payoff of $v_2(p_2) = \max\{1,2p_2-0.5\}$. Note that this expectation is with respect to the equilibrium distribution choice of the nature that is $h_2=-0.5$. Assume the experimenter's intertemporal discount rate is $\delta \in (0,1]$. Further, let $\BP^h$ denote the probability measure induced by the independent product of $\ve_1 \sim \mc{N}(h,1)$ and $\theta \sim p_1$. Therefore, the DM's value function as of beginning of period one is
\begin{equation}
\label{eq: 2periodvalue}
v_1(p_1) = \max_{\mu_1 \in [0,1]} \min_{h \in \{-0.5,0.5\}}\left\{\left[(1-\mu_1)+2\mu_1p_1+\sqrt{\mu_1}h)\right]+ \delta\BE^{h} \left[v_2(p_2^h)\right]\right\}.
\end{equation}
Below we point out to some of the underlying equilibrium forces that will show up in this two period example.
\begin{enumerate}[leftmargin=*,label=(\roman*)]
    \item The nature's first period action, or alternatively, the most pessimistic perception of the DM in regard to shock distribution $\ve_1^h$, plays two roles. \textbf{Current payoff channel}, in that the nature's choice of $h$ affects the current payoff of the DM by changing the mean return of the ambiguous arm, i.e $\left[(1-\mu_1)+2\mu_1p_1+\sqrt{\mu_1}h)\right]$. In particular, this is a \textit{positive} force, as higher $h$'s correspond to higher mean flow payoff.
    \textbf{Informational channel}, where the shock distribution $\ve_1^h$ affects the next period belief of the DM, hence changes her course of action and thereby the continuation payoff. This has a \textit{negative} effect, because as $h$ increases, the distribution of $\Delta y_{2,1}$ shifts to the right in the FOSD sense and for a fixed $\Delta y_{2,1}$ lowers the likelihood ratio in \eqref{eq: p2update} that in turns depresses the continuation payoff $\BE^{h} \left[v_2(p_2^h)\right]$. At the equilibrium, nature counteracts these forces and picks the one that its negative effect outweighs the positive one, and thus reduces the DM's payoff more. However, it can not completely balance out the marginal impact of these forces, mainly because we assumed the multiple prior set consists of only two distributions. When the complete mode is laid out in section \ref{sec: model}, we allow for quite general multiple prior set, thus nature can precisely cancel out the marginal effects, thereby lowering the DM's payoff as much as possible. 

    \item From the point of view of the DM, there is an option value of experimentation. Specifically, in the first period she selects the ambiguous arm (even partially $0<\mu<1$) only to observe the payout of second arm, and then may decide to abandon the ambiguous project depending on the outcome of the first period. In this example, the DM switches back to the safe arm in the second period if her posterior in the equilibrium, i.e $p_2^{h^*}$, drops below a certain threshold, which in this case is $0.75$.

    \item The DM's value function is unambiguously increasing in her initial belief $p_1$ (as can be confirmed from \eqref{eq: 2periodvalue}), but the marginal value of good news need not be increasing (meaning $v''$ is not always positive). This is mainly due to the finite-horizon setup of the two-period model, which is relaxed in later sections.
    
    \item The value function in \eqref{eq: 2periodvalue} refers to the max-min value of the game, which is associated to the strategic order of actions in which the DM takes her action first and then the nature responds in every period. This is the same approach that we pursue when we present the complete model. However, one might wonder when does this max-min value coincide with the min-max one? Or in the other words, when does the strategic order of players' actions become irrelevant? In this example the max-min value is strictly less than min-max. Although not related to the study of this paper, but we confirm that with compact and convex action spaces of both players, the von-Neumann minimax theorem could be applied and therefore one can conceive the unique value of the zero-sum game between DM and the nature.
\end{enumerate}
We do not intend to delve deeper into this example and express more specific results and comparative statics, mainly because such analysis will be carried out for the complete model later in the paper.

\section{Experimentation model}
\label{sec: model}
Time horizon is infinite and $t \in \BR_+$. There are two projects available to experiment by the DM. Her choice at time $t$ is thus to allocate her resources between two alternatives, namely $\mu_t$ to the ambiguous arm and $1-\mu_t$ to the safe arm. The return process of the projects are\footnote{The goal of this section is to study the interplay between ambiguity regarding the new arm and optimal experimentation, thus for simplicity we assume that the conventional arm has a sure return rate of $r$ and is not subject to any source of randomness. Therefore, it is only the second arm that carries the Brownian motion term.}
\begin{equation}
\label{eq: payoffprocesses}
	\begin{split}
		\d y_{1,t} &= (1-\mu_t)r \d t\\
		\d y_{2,t} &= \mu_t \theta \d t +\sigma\sqrt{\mu_t} \d B_t.
	\end{split}
\end{equation}
Here $B$ is a Brownian motion relative to some underlying stochastic basis\footnote{The description of the underlying stochastic basis and the joint structure of processes are explained in the subsection devoted to the \textit{weak formulation}.}, that represents the shock process, and $\theta$ is unknown to the DM but belongs to the binary set $\{\bar{\theta},\underline{\theta}\}$, where $\underline{\theta} \leq r \leq  \bar{\theta}$. The DM has an initial belief $p_0=\BP\left(\theta=\overline{\theta}\right)$ about $\theta$ which is independent from $B$. The form of return processes in \eqref{eq: payoffprocesses} follows \cite{bolton1999strategic}, but we let the DM to associate multiple distributions to the shock process. Specifically, the DM holds a single belief over $\theta$ --- so that this represents the uncertainty due to \textit{risk} --- but has multiple beliefs regarding the shock distribution $B$ --- so this represents the uncertainty due to \textit{ambiguity}.\footnote{This type of uncertainty is sometimes referred to as \textit{model uncertainty} in the literature.}
\subsection{A framework for modelling ambiguity}
Our take of ambiguity or model uncertainty is similar to \cite{hansen2006robust} and \cite{hansen2011robustness}. In particular, we assume there is a family of pairs $\left\{(\BP^h,B^h): h\in \mc{H}\right\}$ such that for each $h \in \mc{H}$, $B^h$ is a Brownian motion under $\BP^h$, and DM views this as her multiple prior set. We think of $\mc{H}$ -- which thus far has not been defined -- as the nature's action space, and each $h\in \mc{H}$ is deemed as a possible nature's move. We assume there exists a \textit{benchmark} probability specification $\BP$ that is \textit{equivalent} (mutually absolutely continuous with respect) to each member of $\mc{P}:=\{\BP^h: h\in \mc{H}\}$. The benchmark measure $\BP$ and the set $\mc{P}$ are interpreted differently based on the context. For example, DM might believe that $\BP$ is the underlying probability measure, but considers $\mc{P}$ as the approximations of the true distribution because she has preference for robustness. Alternatively, $\mc{P}$ could be conceived as the multiple prior set for the ambiguity averse DM in the axiomatic treatment of \cite{gilboa1989maxmin}.

DM has \textit{Multiplier preference} and maximizes the following payoff over an \textit{admissible} set of experimentation strategies $\mc{U}$ --- with some technical considerations that are elaborated later in the paper:
\begin{equation}
\label{eq: multu}
	\inf_{h \in \mc{H}} \left\{\BE^{\BP^h}\left[\delta \int_0^\infty e^{-\delta t} \d \left(y_{1,t} + y_{2,t}\right)\right]+\alpha  H\left(\BP^h; \BP\right)\right\}
\end{equation}
Here $\delta$ is the time discount rate. The first term in the DM's utility is simply the expected discounted payoff from both projects taken with respect to the measure $\BP^h$, and the second term penalizes the belief misspecification using the relative discounted entropy to measure the discrepancy between $\BP$ and $\BP^h$. Parameter $\alpha$ captures the extent of this penalization, where its larger values associate to smaller penalty. We shall also interpret $\alpha$ as the inverse of ambiguity aversion and relate \eqref{eq: multu} to the \textit{dynamic variational utility representation} of \cite{maccheroni2006ambiguity} and \cite{maccheroni2006dynamic}. A large $\alpha$ means that the DM does not suffer a lot from ambiguity aversion. In contrast as $\alpha \to 0$, the DM experiences larger utility loss due to severe penalization.

In the next subsection we use the \textit{weak-formulation} approach from the theory of stochastic processes to elaborate and simplify DM's utility function \eqref{eq: multu}.
\subsection{Weak formulation}
\label{sub:weak_formulation}
In this part we present a sound foundation for the joint structure of all the stochastic processes in the model\footnote{The materials in this subsection might look somewhat technical and unnecessary to some readers, but are essential for rigorous development of the model.}. Let $\left(\Omega, \mc{F}=\mc{F}_\infty, \mb{F}=\{\mc{F}_t\}_{t\in \BR_+},\BP\right)$ be the stochastic basis, where the filtration satisfies the \textit{usual conditions}.\footnote{It is right-continuous and $\BP$-complete.} The average rate of return to the ambiguous project $\theta$ is a binary $\mc{F}_0$-measurable random variable.

\begin{definition}[Strategy spaces]
\label{def: strategy_space}
The DM's strategy space $\mc{U}$ --- with a representative point $\mu\in \mc{U}$ --- is the set of all $\mb{F}$-progressive processes\footnote{We refer to \cite{karatzas2012brownian} for the definition of progressive processes.} taking value in $[0,1]$. The nature's strategy space $\mc{H}$ --- with a representative point $h\in \mc{H}$ --- is the space of all \textit{bounded} $\mb{F}$-progressive processes.
\end{definition}

\begin{definition}[Integral forms]
For any pair of processes $\{f,g\}$ where $f$ is $g$-integrable\footnote{The notion of integral depends on the context that could either be the path-wise Stieltjes integral or stochastic It\^o integral.} we use the alternative notation for integration: $(f \cdot g)_t: = \int_0^t f_s\d g_s$. Further, the symbol $\imath$ refers to identity mapping $t\mapsto t$ on $\BR_+$. Then the differential return expressions in \eqref{eq: payoffprocesses} can be represented in the integral form $y_1 = (1-\mu)r \cdot \imath$ and $y_{2} = \mu\theta \cdot \imath+\sigma \sqrt{\mu} \cdot B$.
\end{definition}
To model the ambiguity we appeal to the weak formulation. In particular, we think of ambiguity as the source that changes the distribution of return process $\{y_1,y_2\}$, but not its sample paths. For this on every finite interval $[0,T]$ we define the probability measure $\BP^h_T$ with the following Radon-Nikodym derivative process:
\begin{equation}
\label{eq: RNfinite}
	\left.\frac{\d \BP^h_T}{\d \BP}\right|_{\mc{F}_t}:=L_{t,T}^h=\exp\left\{\left(h \cdot B\right)_t -\frac{1}{2}\left(h^2 \cdot \imath\right)_t\right\}, \  \  \    \forall t \leq T
\end{equation}
This relation explains how nature with its choice of $h\in \mc{H}$ could induce a new probability measure. The Girsanov's theorem implies that $\BP^h_T$ is mutually absolutely continuous with respect to $\BP$ --- that is often called \textit{equivalent} measure and denoted by $\BP^h_{T} \sim \BP$ on $\mc{F}_T$. It also implies that the \textit{mean-shifted} process $B^h:=B - (h\cdot \imath)$ is a $\mb{F}$-Brownian motion under $\BP^h_T$ over the interval $[0,T]$. The main catch here is that we can only characterize the perturbations of benchmark probability model $\BP$ over finite intervals, that is for example we know how $\BP^h$ looks like on $\mc{F}_T$ for any finite $T$. However, what is needed for the utility representation in \eqref{eq: multu} is a specification of $\BP^h$ on the terminal $\sigma$-field $\mc{F}_\infty$. For this we need to use a limiting argument to consistently send $T \to \infty$ and obtain $(\BP^h,L^h,B^h)$ as an appropriate limit of $(\BP^h_T,L^h_T,B^h_T)$. Our proposal for this is as follows. For any process $h \in \mc{H}$ and an increasing sequence of finite times $\{T_n\}_{n \in \BN}$, we repeatedly apply the Girsanov's theorem to obtain a family of consistent probability measures $\left\{\BP^h_{T_n},\mc{F}_{T_n}: n \in \BN\right\}$, where $\BP^h_{T_n}\sim \BP$ on $\mc{F}_{T_n}$ for every $n \in \BN$. In a similar vein we obtain the likelihood ratio process $\left\{L^h_{t,T_n}: t\leq T_n\right\}$ and the Brownian motion $\left\{B^h_{t,T_n}: t \leq T_n\right\}$ for every $n \in \BN$. Next, we explain how to naturally define the limit of each three components.
\begin{enumerate}[leftmargin=*,label=(\roman*)]
    \item \textbf{Likelihood process limit:} Expression \eqref{eq: RNfinite} implies that the sequence of likelihood processes are path-wise consistent with each other, i.e $L_{t,T_m}^h = L_{t,T_n}^h$ for every $t \leq T_m \leq T_n$. Therefore, one can define the process $L^h$ on $[0,\infty)$ in a meaningful sense, such that its restriction to any finite interval coincides with the sequence of likelihood processes. This concludes the construction of the limit likelihood process. Importantly, this construction suggests that $L^h$ must be a martingale process with respect to $\BP$ on $\BR_+$. To see this, note that a bounded $h$ causes the \textit{Novikov's} condition to hold, thereby $L^h_{T_n}$ would be an uniformly integrable martingale --- on $[0,T_n]$ --- with respect to $\BP$ for every $n \in \BN$. Because of the path-wise equivalence, this would immediately establish the martingale property of $L^h$ on $\BR_+$.
    
    \item \label{item: prob_cons}\textbf{Probability measure limit:} First, recall that for every $n\in \BN$, $\BP^h_{T_n}$ is a probability measure on $\mc{F}_{T_n}$. Then, the path-wise consistency resulted from \eqref{eq: RNfinite} implies that these measures indeed match each other, namely $\BP^h_{T_n}(A) = \BP^h_{T_m}(A)$ for every $A \in \mc{F}_{T_m}$ where $m\leq n$. Thus, we can apply theorem 4.2 in \cite{parthasarathy2005probability} that guarantees the existence of a \textit{closing} probability measure $\BP^h$ on $\mc{F}_\infty$ such that its restrictions to finite intervals coincide with the above sequence of probability measures, yet it need not be equivalent to $\BP$ on $\mc{F}_\infty$. That is restricted to every finite $T$, $\BP^h \sim \BP$ on $\mc{F}_T$, but this may not be true on $\mc{F}_\infty$.
    
    \item \textbf{Brownian motion limit:} Applying Girsanov's theorem lets us to deduce that $B^h_{T_n}:=\{B^h_{t,T_n}:t\leq T_n\}$ is a Brownian motion under $\BP^h_{T_n}$ on $[0,T_n]$ for every $n \in \BN$. Since $\BP^h \equiv \BP^h_{T_n}$ on $[0,T_n]$, then it turns out that $B^h_{T_n}$ is also a Brownian motion under $\BP^h$. Also note that the path-wise consistency holds for the sequence of Brownian motions, namely $B^h_{t,T_n}= B^h_{t,T_m}$ for all $t\leq T_m \leq T_n$. Therefore, in the same manner that we defined $L^h$ from $\{L^h_{T_n}:n \in \BN\}$, we can define $B^h$ as the process on $\BR_+$ such that its restrictions to any finite interval satisfy the properties of $\BP^h$ Brownian motions. 
\end{enumerate}

The illustrated construction of $\left(\BP^h,B^h\right)$ allows us to express the return process of the ambiguous project in term of $h$-Brownian motion:
\begin{equation}
\label{eq: y2muh}
	\d y_{2,t} = \left[\mu_t\theta + \sigma \sqrt{\mu_t}h_t\right]\d t + \sigma \sqrt{\mu_t}\d B_t^{h}
\end{equation}
The merit of weak formulation now becomes clear, where for every $\mu \in \mc{U}$ the return processes $\{y_1,y_2\}$ are essentially fixed, but the probability distribution that assigns weights to the subsets of sample paths is controlled by the choice of $h \in \mc{H}$. So in a sense the nature's move is to select the return's distribution not its sample paths.

Now that we know what is meant by $\BP^h$ on $\mc{F}_\infty$ we can analyze both terms of \eqref{eq: multu} which are expectations under $\BP^h$, and this will be the goal of next subsection.

\subsection{Unravelling the payoff function}
\label{sub: payoff}
We begin the simplification of \eqref{eq: multu} by elaborating the second term, that is the entropy cost of ambiguity aversion. Recall that $\BP$ and $\BP^h$ need not necessarily be equivalent measures on $\mc{F}_\infty$, yet their restrictions $\{\BP_t,\BP^h_t\}$ are indeed equivalent probability measures on $\mc{F}_t$. Having that said, the relative discounted entropy is defined as
\begin{equation}
\label{eq: disrelentr}
 H\left(\BP^{h} ; \BP\right) := \lim_{T\to \infty} \delta \int_0^T e^{-\delta t} H\left( \BP^{h}_t; \BP_t\right)\d t,
\end{equation}
where $H\left(\BP_t^h;\BP_t\right):=\BE^{h}\left[\log L_t^{h}\right]$. Expression \eqref{eq: disrelentr}, which is proposed in \cite{hansen2006robust}, presents a proxy for the discrepancy between two measures that are not necessarily equivalent on the terminal $\sigma$-field, and hence their relative entropy $H\left(\BP^h_\infty;\BP_\infty\right)$ could be infinite, but on each finite interval say $[0,t]$ they are equivalent $\BP^h_t \sim \BP_t$ and have finite relative entropy. Therefore, one shall hope that relation \eqref{eq: disrelentr} is well-defined.

\begin{lemma}
\label{lem: wellentr}
     The discounted relative entropy in \eqref{eq: disrelentr} is \textit{well-defined}, namely for every $h\in \mc{H}$ it is finite and satisfies
 	\begin{equation}
 		H\left(\BP^{h}; \BP\right)= \frac{1}{2} \BE^{h}\left[\int_0^\infty e^{-\delta t} h^2_t \d t\right]<\infty.\
 	\end{equation}
\end{lemma}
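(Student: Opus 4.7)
The plan is to compute $H(\mathsf{P}^h_t;\mathsf{P}_t) = \mathsf{E}^h[\log L_t^h]$ explicitly for each finite $t$, and then pass to the limit using Fubini and the boundedness of $h$. Starting from the explicit form $\log L_t^h = (h\cdot B)_t - \tfrac{1}{2}(h^2\cdot \imath)_t$, I would rewrite it in terms of the $\mathsf{P}^h$-Brownian motion $B^h = B - (h\cdot \imath)$, which gives
\[
\log L_t^h = \int_0^t h_s\, \d B_s^h + \tfrac{1}{2}\int_0^t h_s^2\, \d s.
\]
Since $h \in \mc{H}$ is bounded (by definition \ref{def: strategy_space}), $\int_0^t h_s\, \d B_s^h$ is a square-integrable $\mathsf{P}^h$-martingale, hence has zero mean under $\mathsf{P}^h$. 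Therefore $H(\mathsf{P}^h_t;\mathsf{P}_t) = \tfrac{1}{2}\mathsf{E}^h\bigl[\int_0^t h_s^2\, \d s\bigr]$.

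Next I plug this into the definition of discounted relative entropy and use Fubini (legitimate since $h_s^2 \geq 0$) to swap the time-discount integral with the inner integral:
\[
\delta \int_0^T e^{-\delta t} H(\mathsf{P}^h_t;\mathsf{P}_t)\, \d t = \tfrac{\delta}{2}\,\mathsf{E}^h\!\left[\int_0^T \!\int_0^t e^{-\delta t} h_s^2\, \d s\, \d t\right] = \tfrac{1}{2}\,\mathsf{E}^h\!\left[\int_0^T h_s^2 \bigl(e^{-\delta s}-e^{-\delta T}\bigr)\, \d s\right].
\]
Sending $T\to\infty$ and invoking monotone convergence on the non-negative integrand $h_s^2 e^{-\delta s}\bigl(1-e^{-\delta(T-s)}\bigr)$ yields the claimed formula $H(\mathsf{P}^h;\mathsf{P}) = \tfrac{1}{2}\mathsf{E}^h\bigl[\int_0^\infty e^{-\delta t} h_t^2\, \d t\bigr]$. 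Finiteness then follows immediately from boundedness: if $\|h\|_\infty \leq M$, the expression is bounded by $M^2/(2\delta)$.

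The only subtle step is the zero-mean claim for the stochastic integral under $\mathsf{P}^h$, which is where the boundedness hypothesis in definition \ref{def: strategy_space} is used: it guarantees both that $B^h$ is a genuine $\mathsf{P}^h$-Brownian motion on $\mathbb{R}_+$ (as constructed in subsection \ref{sub:weak_formulation}) and that $\mathsf{E}^h\bigl[\int_0^t h_s^2 \d s\bigr] < \infty$, so $(h\cdot B^h)_t$ is an $L^2$-martingale rather than merely a local martingale. Everything else is a routine Fubini/monotone convergence exercise, and no delicate uniform-integrability argument is needed because the integrand is non-negative.
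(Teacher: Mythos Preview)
Your proposal is correct and follows essentially the same approach as the paper. The only cosmetic difference is that the paper evaluates $\delta\int_0^T e^{-\delta t}(h^2\cdot\imath)_t\,\d t$ via integration by parts rather than Fubini, arriving at the identical expression $\int_0^T h_s^2\bigl(e^{-\delta s}-e^{-\delta T}\bigr)\,\d s$ before passing to the limit.
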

Roughly speaking, for the first component of the payoff function we need to take the expectation of $\d y_2$ under the measure $\BP^h$. This is in our reach because we stated the dynamics of $y_2$ in terms of $B^h$ in \eqref{eq: y2muh}.  However, the drift term in $\d y_2$ contains the random variable $\theta$, that needs to be learned and projected onto the DM's information set. For this we present an optimal filtering result under each measure $\BP^h$. 
\begin{remark}
The DM's initial prior $p_0=\BP\left(\theta =\ovl{\theta}\right)$ is unaffected under different probability distributions $\mc{P}=\left\{\BP^h: h \in \mc{H}\right\}$. This is because the benchmark measure $\BP$ and all its variations $\mc{P}$ agree on $\mc{F}_0$, resulted from $L_0^h=1$ for every $h \in \mc{H}$.
\end{remark}
In light of this remark, we want to continuously estimate and update the DM's posterior on $\theta$ based on her available information at every point in time. Her information set at time $t$ contains the path of output from each project $\left\{\left(y_{1,s},y_{2,s}\right): s\leq t\right\}$, the history of her allocation process $\left\{\mu_s: s\leq t\right\}$ and importantly the nature's moves up until time $t$, i.e $\left\{h_s: s\leq t\right\}$. Note that at each time $t$, the DM's ambiguity is with regard to the future path of $h$, and she has no uncertainty about the history of nature's moves in the past. Some might not be willing to make this assumption about the ex-post observability of nature's moves to the DM. However, this is not an important assumption for two reasons. First, on the equilibrium path the DM knows the history of nature's past moves. Secondly, in theory we can find the filtering equation under every possible history of nature's actions and then let the DM to pessimistically choose from this family of posteriors. In summary, the filtering problem that the DM faces at time $t$ is to update her posterior based on the available information set $\mc{F}^{y_1,y_2, \mu,h}_t$. Of secondary importance is to note that $y_1$ conveys no information about $\theta$, thus can be dropped out of the information set.
\begin{definition}
For every $t \in (0,\infty)$, define $p_t^h:=\BP^h\left(\left.\theta = \bar{\theta} \right|\mc{F}_t^{y_2,\mu,h}\right)$ as the posterior probability and $m(p_t^h) = p_t^h \bar{\theta}+(1-p_t^h)\underline{\theta}$ as the conditional mean. At $t=0$, let $p_t^h=p_0$ and $m(p_0^h)=m(p_0)$.
\end{definition}
\begin{lemma}[\cite{liptser2013statistics} theorem 8.1]
\label{lem: optfilt}
The conditional probability of the event $\{\theta=\ovl{\theta}\}$ given the filtration $\mb{F}^{y_2,\mu,h}$ evolves according to the following stochastic differential equation:
\begin{equation}
\label{eq: beliefmg}
    \d p_t^{h}=\frac{(\bar{\theta}-\underline{\theta})\sqrt{\mu_t}}{\sigma}p_t^{h}\left(1-p_t^{h}\right)\d \bar{B}_t^{h}
\end{equation}
Here $\left\{\bar{B}^h_t, \mc{F}^{y_2,\mu,h}_t: t\in \BR_+\right\}$ is called the innovation process which is a Brownian motion under $\BP^h$, and is characterized by $\d \bar{B}^h_t = \sigma^{-1}\sqrt{\mu_t}\left[\theta-m(p^h_t)\right]\d t+ \d B^h_t$. As a result of this, the law of motion for $y_2$ would be
\begin{equation}
\label{eq: filteredy2}
	\d y_{2,t}=\left[\mu_t m(p_t^h)+\sqrt{\mu_t}h_t\right]\d t+\sigma \sqrt{\mu_t}\d \bar{B}^h_t.
\end{equation}
\end{lemma}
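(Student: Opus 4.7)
The plan is to apply the Fujisaki--Kallianpur--Kunita (FKK) filtering formula --- precisely as packaged in Liptser--Shiryaev, Theorem 8.1 --- to the signal $\theta$ and the observation $y_2$ under a fixed measure $\BP^h$. Fix $h \in \mc{H}$. By the weak-formulation construction of subsection \ref{sub:weak_formulation}, $B^h = B - (h \cdot \imath)$ is a $\BP^h$-Brownian motion on $\mb{F}$, so \eqref{eq: y2muh} reads
\[
    \d y_{2,t} = g_t(\theta)\, \d t + G_t\, \d B^h_t, \qquad g_t(\theta) := \mu_t \theta + \sqrt{\mu_t}\, h_t, \quad G_t := \sigma\sqrt{\mu_t}.
\]
The signal is time-invariant ($\d\theta = 0$, with $\theta$ an $\mc{F}_0$-measurable binary random variable of law $p_0$), while the controls $\mu$ and $h$ are by construction adapted to the observation filtration $\mb{F}^{y_2,\mu,h}$; hence the only non-trivial filtering target is $\theta$, i.e., the posterior $p_t^h$. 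This places us squarely in the classical nonlinear filtering setup with a deterministic two-point signal.

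For the derivation itself, I would first identify the innovation $\d \bar{B}^h_t := G_t^{-1}\bigl[\d y_{2,t} - \BE^h[g_t(\theta) \mid \mc{F}_t^{y_2,\mu,h}]\, \d t\bigr]$. Noting that $\BE^h[g_t(\theta) \mid \mc{F}_t^{y_2,\mu,h}] = \mu_t m(p_t^h) + \sqrt{\mu_t}\, h_t$, this substitution yields exactly the formula for $\bar{B}^h$ stated in the lemma. The FKK equation applied to the indicator $f(\theta) = 1_{\{\theta = \bar\theta\}}$ reads
\[
    \d p_t^h = \frac{\BE^h\bigl[(f - p_t^h)(g_t(\theta) - \BE^h[g_t \mid \mc{F}_t^{y_2,\mu,h}]) \,\big|\, \mc{F}_t^{y_2,\mu,h}\bigr]}{G_t}\, \d \bar{B}^h_t.
\]
Since $g_t(\theta) - \BE^h[g_t \mid \mc{F}_t^{y_2,\mu,h}] = \mu_t(\theta - m(p_t^h))$, a direct two-point computation gives the conditional covariance $\mu_t (\bar\theta - \underline\theta)\, p_t^h(1 - p_t^h)$; dividing by $G_t = \sigma\sqrt{\mu_t}$ yields \eqref{eq: beliefmg}, and rewriting $\d y_{2,t}$ in terms of $\d\bar{B}^h_t$ produces \eqref{eq: filteredy2}. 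The Brownian property of $\bar{B}^h$ under $\BP^h$ follows from L\'evy's characterization: continuity is immediate, the $\mb{F}^{y_2,\mu,h}$-martingale property is the content of the innovations theorem, and the quadratic variation equals $t$ because the drift correction $(\sqrt{\mu_t}/\sigma)[\theta - m(p_t^h)]\,\d t$ has finite variation.

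The principal technical subtlety will be the potential degeneracy of $G_t = \sigma\sqrt{\mu_t}$ on $\{\mu_t = 0\}$, which formally invalidates the division by $G_t$. The natural remedy is to observe that where $\mu_t$ vanishes the observation $y_2$ is locally constant and conveys no instantaneous information about $\theta$, so the posterior must be frozen there; this is already encoded in \eqref{eq: beliefmg}, whose diffusion coefficient vanishes precisely on that set. Rigorously, one may approximate $\mu_t$ by $\mu_t \vee \varepsilon$, apply the non-degenerate FKK theorem, and let $\varepsilon \downarrow 0$, or directly invoke the degenerate form of the filtering equation developed in Chapter 8 of Liptser--Shiryaev. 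The required integrability conditions --- boundedness of $h$, the bounded range of $\theta$, and $\BE^h\int_0^t \mu_s\,\d s < \infty$ --- all follow from $\mu \in [0,1]$, the binary support of $\theta$, and definition \ref{def: strategy_space}, so verifying the hypotheses is routine.
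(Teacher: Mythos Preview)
Your approach is correct and essentially identical to the paper's: both invoke Liptser--Shiryaev Theorem 8.1 on the pair $(\theta,y_2)$ under $\BP^h$ and reduce the filter to the two-point conditional-variance computation, the only cosmetic difference being that the paper first passes to the informationally equivalent observation $\tilde y_2 := (\sqrt{\mu}\,\theta+\sigma h)\cdot\imath+\sigma\cdot B^h$ with constant diffusion $\sigma$, thereby sidestepping the degeneracy at $\mu_t=0$ rather than treating it by an $\varepsilon$-approximation as you propose. One minor slip: your drift $g_t(\theta)=\mu_t\theta+\sqrt{\mu_t}\,h_t$ is missing a factor $\sigma$ on the $h$-term (compare \eqref{eq: y2muh}); this is harmless for the conditional covariance since the observable $h$-part cancels, but it would distort your stated innovation $\bar B^h$ if taken literally.
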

\textit{Sketch of the proof.} First note that from the filtering point of view the process $y_2$ contains the same information as $\tilde{y}_2:=(\sqrt{\mu}\theta+\sigma h) \cdot \imath +\sigma \cdot B^h$. Therefore, on the region $\mu>0$, we have $\BE^h\left[\left. \theta \right|\mc{F}_t^{\tilde{y}_2,\mu,h}\right]=\BE^h\left[\left. \theta \right|\mc{F}_t^{y_2,\mu,h}\right]$ for every $h \in \mc{H}$ and $t\in \BR_+$. Next, applying theorem 8.1 of \cite{liptser2013statistics} and taking $\tilde{y}_2$ as the \textit{observable} process and $\theta$ as the subject of filtering imply that:
\begin{equation}
\label{eq: fundametal_filter}
    \begin{split}
        \BE^h\left[\left. \theta \right|\mc{F}_t^{\tilde{y}_2,\mu,h}\right] &= \BE^h\left[\left. \theta \right|\mc{F}_0^{\tilde{y}_2,\mu,h}\right]\\
        &\hspace{-40pt}+\sigma^{-1}\int_0^t \left(\BE^h\left[\left. \theta \left(\sqrt{\mu}_s\theta+h_s\right) \right|\mc{F}_s^{\tilde{y}_2,\mu,h}\right] -\BE^h\left[\left. \theta \right|\mc{F}_s^{\tilde{y}_2,\mu,h}\right] \BE^h\left[\left.\sqrt{\mu}_s\theta+h_s  \right|\mc{F}_s^{\tilde{y}_2,\mu,h}\right]\right)\d \bar{B}^h_s\\
        &=\BE^h\left[\left. \theta \right|\mc{F}_0^{\tilde{y}_2,\mu,h}\right]+\sigma^{-1}\int_0^t \sqrt{\mu_s}\left(\BE^h\left[\left. \theta^2 \right|\mc{F}_s^{\tilde{y}_2,\mu,h}\right]-\BE^h\left[\left. \theta \right|\mc{F}_s^{\tilde{y}_2,\mu,h}\right]^2\right)\d \bar{B}^h_s
    \end{split}
\end{equation}
This expression underlies the filtering equation for the posterior process $p^h$, as it readily amounts to
\begin{equation}
    \begin{split}
        p^h_t=p_0+\sigma^{-1}(\bar{\theta}-\underline{\theta})\int_0^t \sqrt{\mu_s} p_s^h(1-p_s^h)\d \bar{B}_s^h,
    \end{split}
\end{equation}
and thus verifies equation \eqref{eq: beliefmg}. It is worth mentioning here that since there is no ambiguity about $\theta$ at time $0$ w.r.t the distribution of $\theta$, the first term in the \textit{rhs} of \eqref{eq: fundametal_filter} is independent of $h$. \qed

At this stage we have developed all the required tools to present the utility function in \eqref{eq: multu} in terms of initial belief and the players' actions. For this we define the infinite horizon payoff as the limit of finite horizon counterparts. The reason is that the constructed process $B^{h}$ is only Brownian motion over finite intervals, and we can not extend it to entire $\BR_+$, unless we impose further restrictions on $\mc{H}$ and $\mc{U}$ to obtain the uniform integrability of likelihood processes, which we refrain to do. Therefore, inspired by \eqref{eq: multu} we define the utility of DM from taking action $\mu$ while nature chooses $h$ by
\begin{equation}
\label{eq: payoffdef}
V(p;\mu,h):=\lim_{T \to \infty}\BE^h\left[\delta \int_0^T e^{-\delta t} \left(\d y_{1,t}+\d y_{2,t}+\alpha H\left(P_t^h;P_t\right)\d t\right)\right].
\end{equation}
\begin{proposition}
\label{prop: payoffrep}
    For every choice of $\mu \in \mc{U}$ and $h\in \mc{H}$, the net discounted average payoff defined in \eqref{eq: payoffdef} can be expressed as:
    \begin{equation}
    	\label{eq: payoffrep}
		V(p;\mu,h)= \BE^{h}\left[\delta \int_0^\infty e^{-\delta t}\left((1-\mu_t)r+ \mu_t m(p_t^h) +\sigma \sqrt{\mu_t}h_t+\frac{\alpha}{2\delta}h^2_t\right)\d t\right]
    \end{equation}
\end{proposition}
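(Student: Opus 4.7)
The plan is to substitute the filtered dynamics of $y_2$ from Lemma \ref{lem: optfilt}, invoke Lemma \ref{lem: wellentr} for the entropy term, and show that the stochastic-integral contribution vanishes under $\BP^h$. Because $B^h$ (and hence $\bar B^h$) is constructed as a $\BP^h$-Brownian motion only on finite intervals, I would work on $[0,T]$ first and pass to the limit at the end.

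First, I would fix $T<\infty$ and split the integrand in \eqref{eq: payoffdef} into four pieces. The $y_1$-piece is pathwise equal to $\delta\int_0^T e^{-\delta t}(1-\mu_t)r\,dt$, which is bounded by $r$, so Fubini/Tonelli legitimately exchanges $\BE^h$ and $\int_0^T$. For the $y_2$-piece, I would use \eqref{eq: filteredy2} to write
\[
\delta\int_0^T e^{-\delta t}\,dy_{2,t} = \delta\int_0^T e^{-\delta t}\bigl[\mu_t m(p_t^h)+\sigma\sqrt{\mu_t}h_t\bigr]dt + \delta\int_0^T e^{-\delta t}\sigma\sqrt{\mu_t}\,d\bar B_t^h.
\]
The drift part is absolutely integrable (using $|m(p_t^h)|\leq \max\{|\ovl\theta|,|\unl\theta|\}$, $\mu_t\in[0,1]$, and boundedness of $h$ from Definition \ref{def: strategy_space}), so again Fubini applies.

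Second, I would argue that the stochastic integral has zero $\BP^h$-expectation. Since $\bar B^h$ is a $\BP^h$-Brownian motion on $[0,T]$ and the integrand $e^{-\delta t}\sigma\sqrt{\mu_t}$ is uniformly bounded on $[0,T]$ (as $\mu_t\in[0,1]$), the It\^o integral is a square-integrable $\BP^h$-martingale started at $0$, so its expectation is $0$ for every $T$. This removes the Brownian term entirely.

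Third, the entropy piece $\delta\int_0^T e^{-\delta t}\alpha H(\BP^h_t;\BP_t)\,dt$ is already deterministic (it sits outside any randomness), and by Lemma \ref{lem: wellentr} its limit as $T\to\infty$ equals $\alpha H(\BP^h;\BP) = \tfrac{\alpha}{2}\BE^h\bigl[\int_0^\infty e^{-\delta t}h_t^2\,dt\bigr]$, which is exactly the contribution $\BE^h\bigl[\delta\int_0^\infty e^{-\delta t}\tfrac{\alpha}{2\delta}h_t^2\,dt\bigr]$ appearing in \eqref{eq: payoffrep}. Finally, I would take $T\to\infty$ in the surviving drift terms by dominated convergence, using $e^{-\delta t}$ as the integrable dominator, and pack everything back into one $\BE^h$ by linearity to recover \eqref{eq: payoffrep}.

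The main obstacle is a delicate one rather than a computational one: $\BP^h$ is only guaranteed to be equivalent to $\BP$ on each $\mc F_T$ (see item \ref{item: prob_cons}), so one cannot immediately treat $\bar B^h$ as a Brownian motion on $\BR_+$. The care needed is exactly to do all stochastic-calculus steps on finite $[0,T]$, exploit that Fubini and the martingale property of the It\^o integral are legitimate there, and only pass to $T\to\infty$ after taking expectations (where Lemma \ref{lem: wellentr} already handles the trickiest limit, namely the entropy). Everything else is bounded-integrand bookkeeping.
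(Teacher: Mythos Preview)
Your proposal is correct and follows essentially the same route as the paper: work on $[0,T]$, substitute the filtered dynamics \eqref{eq: filteredy2}, kill the It\^o integral via the martingale property of $\bar B^h$ under $\BP^h$ on finite horizons, invoke Lemma \ref{lem: wellentr} for the entropy contribution, and pass to the limit by dominated convergence using the consistency of $\{\BP^h_T\}$ with $\BP^h$. The only cosmetic difference is that the paper tracks the finite-$T$ entropy remainder $-\tfrac12 e^{-\delta T}\BE^h[(h^2\cdot\imath)_T]$ explicitly before letting it vanish, whereas you absorb that step into the citation of Lemma \ref{lem: wellentr}.
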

This proposition serves us well, because the integrand is now $\mb{F}^{y_2,\mu,h}$-progressively measurable, that in turn allows us to perform a dynamic programming scheme to express the \textit{value function} in terms of the current belief, and this will be the goal of next section.
\section{Dynamic programming analysis}
\label{sec:dynamic_programming_analysis}
Our analysis so far offers expression \eqref{eq: payoffrep} as the DM's payoff in the two-player differential game against the nature. For any point of time, say $t\in \BR_+$, define the expected continuation value conditioned on $\mc{G}_t:=\mc{F}_t^{y_2,\mu,h}$ as
\begin{equation}
\label{eq: Jfunc}
	J(p,t;\mu,h):=\BE^h\left[\left. \delta \int_t^\infty e^{-\delta s}\left((1-\mu_s)r+ \mu_s m(p_s^h) +\sigma \sqrt{\mu_s}h_s+\frac{\alpha}{2\delta}h^2_s\right)\d s\right| \mc{G}_t  \right].
\end{equation}
In that $p$ is the time $t$ value of the state process $p_t^h$. For every $h \in \mc{H}$ the process $\bar{B}^h$ as well as $p^h$ are time \textit{homogeneous} Markov diffusions. Furthermore, the players' action spaces at the time $t$ sub-game --- $\mc{U}_t$ and $\mc{H}_t$ resp. for the DM and the nature --- are essentially isomorphic to $\mc{U}$ and $\mc{H}$. These two premises imply that the max-min value of the game for the DM, i.e $\sup_{\mu \in \mc{U}_t}\inf_{h \in \mc{H}_t} J(p,t;\mu,h)$,
is time homogeneous. Specifically, there exists a value function $v(p)$
such that
\begin{equation}
\label{eq: supinfdef}
	\sup_{\mu \in \mc{U}_t}\inf_{h \in \mc{H}_t}J(p,t;\mu,h)=e^{-\delta t}v(p)	
\end{equation}
Our goal in the next theorem is to present a \textit{verification} result for the value function. For this we need to appeal to the theory of viscosity solution \cite{crandall1984some} that provides the appropriate setting for Bellman equations. The reason for this is that as it turns out the value function $v(p)$ is not twice continuously differentiable everywhere, therefore classical verification techniques relying on Ito's lemma would not apply. We offer some preliminary definitions that are linked to the work of \cite{zhou1997stochastic}\footnote{There were some technical gaps in the proof of the verification theorem in this paper, that are addressed and corrected in the follow up papers \cite{gozzi2005corrected} and \cite{gozzi2010erratum}; thanks to the anonymous referee for bringing this up to the author's attention.}, thereby setting the groundwork for the viscosity solution concept.
\begin{definition}
Let $w \in C([0,1])$. The \textit{superdifferential} of $w$ at $x_0\in [0,1)$ is denoted by $D_+w(x_0)$:
\begin{equation}
    D_+w(x_0)=\left\{(\xi_1,\xi_2)\in \BR^2: \limsup_{x\to x_0}\frac{w(x)-w(x_0)-(x-x_0)\xi_1-\frac{1}{2}(x-x_0)^2\xi_2}{(x-x_0)^2}\leq 0\right\}
\end{equation}
A generic member of this set is referred by $\left(\partial_+w(x_0),\partial_+^2w(x_0)\right)$.
And the \textit{subdifferential}, denoted by $D_-w(x_0)$ is defined as
\begin{equation}
    D_-w(x_0)=\left\{(\xi_1,\xi_2)\in \BR^2: \liminf_{x\to x_0}\frac{w(x)-w(x_0)-(x-x_0)\xi_1-\frac{1}{2}(x-x_0)^2\xi_2}{(x-x_0)^2}\geq 0\right\}.
\end{equation}
A generic member of this set is referred by $\left(\partial_-w(x_0),\partial_-^2w(x_0)\right)$.
\end{definition}
Notice that a continuous function may not be once or twice continuously differentiable but it always has non-empty super(sub)-differential sets on a dense subset of $[0,1]$ \cite{lions1983optimal}.

In the \textit{verification} theorem that follows we show that the value function $v(\cdot)$ in \eqref{eq: supinfdef} is the viscosity solution to a certain HJBI equation with the following form
\begin{equation}
\label{eq: HJBI_general}
	w(p) = \sup_{\mu \in [0,1]} \inf_{h\in \BR} \left\{g(p,\mu,h) +\mathcal{K}(p,w'(p),w''(p),\mu,h)\right\}\footnote{Notice that $w'$ and $w''$ should not be confused with the first and second derivatives as they may not exist for a continuous function. This form is just a \textit{representation} of the HJBI equation that has a viscosity solution in the sense of definition \ref{def: viscosity_sol}, and may not hold a \textit{smooth} classical solution.},
\end{equation}
where the specific form of the coefficients $g$ and $\mathcal{K}$ will be given in the theorem's statement. As a last step before presenting the therorem, we express what is meant by being a viscosity solution to a HJBI equation.
\begin{definition}
\label{def: viscosity_sol}
A function $w\in C([0,1])$ is called a viscosity solution of \eqref{eq: HJBI_general} if it is both a \textit{viscosity subsolution} and a \textit{viscosity supersolution} that are respectively equivalent to:
\begin{subequations}
    \begin{align}
        \label{eq: subsol}
        &-w(p)+\sup_{\mu \in [0,1]} \inf_{h\in \BR} \left\{g(p,\mu,h) +\mathcal{K}(p,\xi_1,\xi_2,\mu,h)\right\}\leq 0, ~\forall (\xi_1,\xi_2) \in D_+w(p),\\
        \label{eq: supersol}
        &-w(p)+\sup_{\mu \in [0,1]} \inf_{h\in \BR} \left\{g(p,\mu,h) +\mathcal{K}(p,\xi_1,\xi_2,\mu,h)\right\}\geq 0, ~\forall (\xi_1,\xi_2) \in D_-w(p).
    \end{align}
\end{subequations}
\end{definition}
\begin{theorem}
\label{thm: valuvis}
Suppose $w \in C([0,1])$ is Lipschitz and a viscosity solution to the following HJBI equation: 
\begin{equation}
\label{eq: HJBI}
	w(p) = \sup_{\mu \in [0,1]} \inf_{h\in \BR} \left\{(1-\mu)r+\mu m(p)+\sigma\sqrt{\mu} h+\frac{\alpha}{2\delta}h^2 +\frac{\mu}{2\delta}\Phi(p) w''(p)\right\},
\end{equation}
where $\Phi(p):= \sigma^{-2}(\bar{\theta}-\underline{\theta})^2p^2(1-p)^2$. Then, $w$ equals $v$, the value function in \eqref{eq: supinfdef}. In the equilibrium, the worst-case density generator is $h^* =-\alpha^{-1}\sigma \delta \sqrt{\mu^*}$, where $\mu^*$ is the DM's best response in
\begin{equation}
\label{eq: hjb}
	w(p)=\sup_{\mu \in [0,1]}\left\{(1-\mu)r+\mu m(p)-\frac{\sigma^2 \delta}{2\alpha}\mu+\frac{\mu}{2\delta}\Phi(p)w''(p)\right\}\footnote{This equation should also be interpreted in the viscosity sense, by dropping the \textsf{infimum} in defnition \ref{def: viscosity_sol}.}.	
\end{equation}
\end{theorem}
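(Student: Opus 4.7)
My plan is first to collapse the inner infimum in \eqref{eq: HJBI} explicitly. Since $h$ enters only through the strictly convex quadratic $\sigma\sqrt{\mu}\,h + \tfrac{\alpha}{2\delta}h^{2}$, the pointwise minimizer is $h^{*}(\mu) = -\alpha^{-1}\sigma\delta\sqrt{\mu}$ with minimum value $-\sigma^{2}\delta\mu/(2\alpha)$. Substituting back yields the reduced HJB \eqref{eq: hjb}, and because the reduction is pointwise in $(p,w',w'')$, a function is a viscosity solution of \eqref{eq: HJBI} iff it is a viscosity solution of \eqref{eq: hjb}. Once the outer maximizer $\mu^{*}(p)$ in \eqref{eq: hjb} is identified, this also pins down the equilibrium density generator $h^{*} = -\alpha^{-1}\sigma\delta\sqrt{\mu^{*}}$ announced in the statement.

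The next step is a two-sided verification argument built on feedback controls. Pretending for a moment that $w \in C^{2}$, apply It\^o's lemma to $e^{-\delta t}w(p_{t}^{h})$ under $\BP^{h}$, using the belief dynamics from Lemma \ref{lem: optfilt} (whence $d\langle p^{h}\rangle_{t} = \mu_{t}\Phi(p_{t}^{h})dt$), and take expectations to get
\begin{equation*}
w(p_{0}) = \BE^{h}\!\left[e^{-\delta T}w(p_{T}^{h})\right] + \BE^{h}\!\left[\int_{0}^{T} e^{-\delta t}\!\left(\delta w(p_{t}^{h}) - \tfrac{\mu_{t}}{2}\Phi(p_{t}^{h})w''(p_{t}^{h})\right)dt\right].
\end{equation*}
For the upper bound $w \leq v$, I use the feedback $\mu_{t} = \mu^{*}(p_{t}^{h})$: the HJBI equality at $\mu^{*}(p)$ gives $\delta w(p) - \tfrac{\mu^{*}(p)}{2}\Phi(p)w''(p) \leq \delta\,F(p,\mu^{*}(p),h)$ for every $h$, where $F$ denotes the integrand in \eqref{eq: payoffrep}. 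Plugging in and letting $T\to\infty$ (boundedness of $w$ on $[0,1]$ kills the tail) yields $w(p_{0}) \leq V(p_{0};\mu^{*},h)$ for every $h$, hence $w(p_{0}) \leq \inf_{h}V(p_{0};\mu^{*},h) \leq v(p_{0})$. For the lower bound I fix arbitrary $\mu$ and use the nature's feedback $h_{t} = h^{*}(\mu_{t})$; the HJBI at $(\mu,h^{*}(\mu))$ now gives the reverse pointwise inequality, and the same integration produces $w(p_{0}) \geq V(p_{0};\mu,h^{*}(\mu)) \geq \inf_{h}V(p_{0};\mu,h)$. Taking the sup over $\mu$ yields $w(p_{0}) \geq v(p_{0})$, completing the match.

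The principal obstacle is that $w$ is only Lipschitz, so the smooth It\^o step above is illegitimate as stated. I would rigorize it along the lines of \cite{zhou1997stochastic} (with the corrections of \cite{gozzi2005corrected,gozzi2010erratum}), replacing pointwise second derivatives by elements of the super/subdifferentials $D_{\pm}w$ and invoking the viscosity inequalities \eqref{eq: subsol}--\eqref{eq: supersol} at the touching points. Concretely, for the upper bound I test $w$ from above by $C^{2}$ functions $\varphi$ (or equivalently use a mollification $w_{\varepsilon}=w*\rho_{\varepsilon}\in C^{2}$), apply It\^o to $\varphi$ (resp.\ $w_{\varepsilon}$), feed in the subsolution inequality at touching points, and pass to the limit by dominated convergence using the uniform Lipschitz bound on $w$ together with the compactness of $[0,1]$ and $[0,1]\times\BR$ restricted to the optimal $h^{*}$. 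The lower bound uses the supersolution inequality symmetrically. Admissibility of the feedback controls $\mu^{*}(p_{t}^{h})$ and $h^{*}(\mu_{t})$, i.e.\ existence of a weak solution to the filtered SDE under the induced measure in the sense of Section \ref{sub:weak_formulation}, follows from Girsanov's theorem together with the boundedness stipulated in Definition \ref{def: strategy_space}. Assembling these pieces delivers $w = v$ and, via Step 1, the equilibrium structure $(h^{*},\mu^{*})$ stated in the theorem.
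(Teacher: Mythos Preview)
Your proposal is correct and follows essentially the same route as the paper: the paper also collapses the inner infimum pointwise to obtain $h^{*}=-\alpha^{-1}\sigma\delta\sqrt{\mu}$, then runs a two-sided verification in the viscosity framework of \cite{zhou1997stochastic}, using $C^{2}$ test functions (Lemma~\ref{lem: mollif}) together with a Fatou-type time-integration to pass from the infinitesimal inequality to the global one. One slip to fix: in your rigorization paragraph you pair the upper bound $w\le v$ with the \emph{sub}solution inequality (testing from above) and the lower bound with the \emph{super}solution inequality, but this is backwards relative to your own smooth sketch and to the paper --- the bound $w\le v$ comes from the supersolution inequality on $D_{-}w$ with $\mu=\mu^{*}$ and arbitrary $h$, while $w\ge v$ comes from the subsolution inequality on $D_{+}w$ with arbitrary $\mu$ and $h=h^{*}(\mu)$.
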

As stated in previous theorem, on the equilibrium path of the game, DM knows the best response of the nature, that is $h(\mu) = -\alpha^{-1}\sigma \delta \sqrt{\mu}$. Therefore, her posterior process follows that of \eqref{eq: beliefmg} for the prescribed $h(\mu)$. Importantly, this means at the equilibrium the DM is no longer concerned about all possible distributions of past shocks. The one that has been picked by the nature is known to the DM on the equilibrium path, which gives rise to the unique law of motion for the posterior belief. Note that, this does not mean that ambiguity is mitigated on the equilibrium path. However, it simply means that similar to the static decision making, where the ambiguity averse agent first perceives the worst case distribution from her multiple prior set, and then responds back, here also she forms her belief and react based on the worst case distribution choice by the nature. Henceforth, by $p$ in \eqref{eq: hjb} and in the rest of the paper we mean the equilibrium posterior value, or often for brevity is simply referred as \textit{belief}. 

Note that the \textit{rhs} of \eqref{eq: hjb} is linear in $\mu$. This is in part due to the effect of $\sqrt{\mu}$ as the volatility term in the ambiguous arm. Consequently, the DM's optimal strategy at every point in time is to either \textit{explore} the ambiguous arm or \textit{exploit} the safe arm\footnote{The trade-off between exploration vs. exploitation has studied in different context. For one we can point to \cite{manso2011motivating} that explains such a trade-off for the financial incentives in entrepreneurship.}. As a result, the DM's value function satisfies the following variational relation:
\begin{equation}
\label{eq: varhjb}
	v(p) =\max \left\{r,m(p)-\frac{\sigma^2 \delta}{2\alpha}+\frac{1}{2\delta}\Phi(p)v''(p)\right\}	
\end{equation}
In the economic terms, $r$ is the DM's reservation value, which can always be achieved regardless of her experimentation strategy. The term $m(p)$ is the expected rate of return from pulling the second arm when the current belief on $\theta$ is $p$. The important term in expression \eqref{eq: varhjb} is $\sigma^2 \delta / 2\alpha$, which we call it \textit{ambiguity cost}. Higher ambiguity aversion, translated to lower $\alpha$, implies higher incurred cost upon pulling the ambiguous arm. Lastly, $\frac{1}{2\delta}\Phi(p) v''(p)$ is the continuation payoff that the DM could expect by holding on to the second arm. We postpone a more elaborate set of analytical results on the value function to the next subsection and instead present the intuition behind the DM's optimal strategy.
\begin{lemma}
The DM's optimal allocation choice with ambiguity aversion $\alpha$ admits the following representation:
\begin{align}
\label{eq: optmarkovstr}
		\mu^*(p)=\begin{cases}
		1 & \text{  if  }  \frac{1}{2\delta}\Phi(p)v''(p)-\frac{\sigma^2\delta}{2\alpha} > r - m(p)\\
		\in [0,1] & \text{  if   } \frac{1}{2\delta}\Phi(p)v''(p)-\frac{\sigma^2\delta}{2\alpha} = r - m(p)\\
		0  & \text{   otherwise   }
		\end{cases}
\end{align}
\end{lemma}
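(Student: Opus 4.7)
The proof rides entirely on the linearity of the objective in the HJB equation \eqref{eq: hjb} with respect to $\mu$, so I would first isolate the coefficient of $\mu$ and then invoke the elementary fact that a linear function on a compact interval attains its maximum at an endpoint. Concretely, rewrite the argument of the supremum in \eqref{eq: hjb} as
\begin{equation*}
(1-\mu)r+\mu m(p)-\frac{\sigma^2\delta}{2\alpha}\mu+\frac{\mu}{2\delta}\Phi(p)v''(p)
= r + \mu\,A(p),
\end{equation*}
where
\begin{equation*}
A(p):= m(p)-r-\frac{\sigma^2\delta}{2\alpha}+\frac{1}{2\delta}\Phi(p)v''(p).
\end{equation*}
Since $\mu\mapsto r+\mu A(p)$ is affine and $[0,1]$ is compact, the supremum is uniquely attained at $\mu=1$ when $A(p)>0$, uniquely at $\mu=0$ when $A(p)<0$, and attained on all of $[0,1]$ when $A(p)=0$. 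Rearranging the three sign conditions $A(p)>0$, $A(p)<0$, $A(p)=0$ into the form $\frac{1}{2\delta}\Phi(p)v''(p)-\frac{\sigma^2\delta}{2\alpha}\gtreqless r-m(p)$ yields exactly \eqref{eq: optmarkovstr}.

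The only conceptual subtlety is that Theorem \ref{thm: valuvis} only delivers $v$ as a viscosity (not classical) solution of \eqref{eq: hjb}, so strictly speaking $v''(p)$ is not defined pointwise everywhere. I would handle this in two ways. At any belief $p$ at which $v$ is $C^2$ (which, as noted after the definition of super/subdifferentials, covers a dense subset and, as shown in the next section, in fact the entire interior of each of the two regions), the expression $A(p)$ is unambiguous and the argument above applies verbatim. At the possibly isolated points where classical differentiability fails, one replaces $v''(p)$ by any element $\xi_2$ with $(\xi_1,\xi_2)\in D_\pm v(p)$ and reads \eqref{eq: optmarkovstr} through the sub/supersolution inequalities \eqref{eq: subsol}--\eqref{eq: supersol}: the bang-bang structure is preserved because both inequalities are obtained by taking suprema of an affine function of $\mu$ over $[0,1]$.

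Since the two-case indicator in the statement (the "$\in[0,1]$" branch) already allows any $\mu$ when the two terms are equal, there is nothing to prove in that degenerate case beyond affirming indifference. I do not anticipate a hard step: the linearity of the HJB argument in $\mu$ is the entire mechanism, and the only mild care needed is to reconcile the pointwise reading of $v''$ with the viscosity framework. The result is therefore a direct corollary of Theorem \ref{thm: valuvis} together with the observation that the ambiguous arm's diffusion coefficient is $\sigma\sqrt{\mu}$, whose square contributes linearly, rather than quadratically, in $\mu$ to the Hamiltonian.
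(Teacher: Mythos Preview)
Your proposal is correct and matches the paper's approach exactly: the paper does not give a separate formal proof of this lemma but simply notes, immediately before stating it, that the right-hand side of \eqref{eq: hjb} is linear in $\mu$ (because the diffusion coefficient is $\sigma\sqrt{\mu}$), so the optimum is bang--bang. Your additional remarks on reconciling the pointwise reading of $v''$ with the viscosity framework go slightly beyond what the paper makes explicit, but they are consistent with the surrounding discussion and do no harm.
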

This result is the analogue of lemma 4 in \cite{bolton1999strategic} tailored to capture the ambiguity aversion. One shall think of $r-m(p)$ as the opportunity cost of experimentation that the DM incurs by not choosing the safe arm. Therefore, she only selects the second project when the continuation value of experimentation adjusted by the ambiguity price exceeds its opportunity cost. Particularly, whenever the two values match, the DM can pursue a \textit{mixed strategy}, in that she can allocate her resources between two arms in any arbitrary proportions. However, the Lebesgue measure of the time duration on which she chooses the mixed strategy is zero, precisely because $p$ follows a diffusion process and the middle case in \eqref{eq: optmarkovstr} never happens $\d \BP \times \mathsf{Leb}$-a.e. The ambiguity aversion essentially creates a situation in that the DM thinks that upon the continuation she will have to face with the most destructive types of shock distribution, and this already lowers the value of experimentation. Importantly, this loss is independent of the current belief level, and shall be viewed as a fixed cost that ambiguity averse agent must be compensated for to undertake the second project.
\section{Properties of the value function and comparative statics}
\label{sec: valuefunction}

In this section we propose closed-form expression for the value function and present sharp comparative statics with respect to ambiguity aversion index $\alpha$. 
\begin{theorem}
\label{thm: vconv}
On the equilibrium path the DM's follows a cut-off experimentation strategy. In particular, there exists $\bar{p} \in [0,1]$ such she selects the safe arm if and only if her posterior belief drops below $\bar{p}$. Further, the value function $v$ is convex on $[0,1]$.
\end{theorem}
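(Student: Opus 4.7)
I would first establish convexity of $v$ and then read off the cut-off structure as an essentially immediate consequence. The main preliminary is the universal lower bound
$$v(p) \;\geq\; m(p) - \frac{\sigma^2 \delta}{2\alpha}, \qquad p \in [0,1].$$
To prove it, I evaluate $\inf_{h} V(p;\mu,h)$ at the admissible open-loop policy $\mu_t \equiv 1$. Under any $\BP^h$, Lemma \ref{lem: optfilt} makes $p_t^h$ a martingale, so $\BE^h[m(p_t^h)] = m(p)$ regardless of $h$; the remaining terms inside the integrand of \eqref{eq: payoffrep} are pointwise bounded below by the quadratic minimum $\sigma h_t + (\alpha/2\delta) h_t^2 \geq -\sigma^2\delta/(2\alpha)$, achieved by the constant $h_t^* = -\alpha^{-1}\sigma\delta$ predicted by Theorem \ref{thm: valuvis}. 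Taking the supremum over $\mu$ then gives $v(p) \geq \inf_h V(p; 1, h) = m(p) - \sigma^2\delta/(2\alpha)$.

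Combining this with the trivial bound $v \geq r$ (from $\mu \equiv 0$), convexity follows from the HJB equation \eqref{eq: hjb}. On the interior of the experimentation region $\{p : v(p) > r\}$, $\mu^* = 1$ is strictly optimal and standard interior regularity for the resulting linear second-order ODE with smooth coefficients upgrades $v$ to $C^2$; rearranging \eqref{eq: hjb} then gives
$$v''(p) \;=\; \frac{2\delta}{\Phi(p)}\!\left[v(p) - m(p) + \frac{\sigma^2\delta}{2\alpha}\right] \;\geq\; 0.$$
On the safe region $\{v = r\}$, $v$ is constant and hence trivially convex. At the free boundary $\bar{p}$ I would use value matching $v(\bar{p}) = r$ from continuity, $v'(\bar{p}-) = 0$ from the flat safe region, and $v'(\bar{p}+) \geq 0$ since $v(p) > r = v(\bar{p})$ just above $\bar{p}$; the resulting dominance of right derivative over left derivative glues the two convex pieces into a globally convex function.

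The cut-off property is then immediate. Since $v \geq r$ pointwise, the safe region coincides with the sublevel set $\{p : v(p) \leq r\}$, which by convexity of $v$ is a (possibly degenerate) subinterval of $[0,1]$. At $p = 0$ one has $m(0) = \unl{\theta} \leq r$ and $\Phi(0) = 0$, so the ambiguous arm offers neither a flow advantage nor any informational benefit, whence $0$ belongs to the safe region. Therefore the safe region has the form $[0, \bar{p}]$ for some $\bar{p} \in [0,1]$, as claimed.

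The main technical obstacle is that $v$ is only characterized as a viscosity solution in Theorem \ref{thm: valuvis}, so the pointwise computation of $v''$ in the second paragraph must be justified through the second-order sub-/super-differentials of Definition \ref{def: viscosity_sol}. I would handle this by invoking interior regularity for linear second-order equations with smooth coefficients on compact subsets of $\{v > r\} \cap (0,1)$, which upgrades any viscosity solution of the locally linear equation to a classical $C^2$ solution there; the degeneracy $\Phi(1) = 0$ at the right endpoint is harmless since the universal lower bound together with continuity of $v$ pins $v(1)$ down directly.
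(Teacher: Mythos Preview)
Your proposal reverses the paper's order of argument. The paper first proves the cut-off structure via a connectedness lemma that relies on monotonicity of $v$ (a property you never invoke), and only afterward pastes the two convex pieces into a globally convex function, again using monotonicity. You instead aim to establish global convexity first and then read off the cut-off as the sublevel set $\{v\le r\}$ of a convex function.

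This alternative route can be made to work, but as written there is a circularity in the gluing step. You write ``At the free boundary $\bar p$ I would use value matching\ldots'' and compare one-sided derivatives at a \emph{single} point $\bar p$, which presupposes that the safe and experimentation regions are already contiguous intervals meeting at one point---precisely what you intend to deduce from global convexity in the following paragraph. A priori the open set $\{v>r\}$ could have several components, each contributing a boundary point, and your derivative comparison does not by itself rule this out.

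The repair is short and stays within your framework. After establishing $v\in C^2$ with $v''\ge 0$ on each connected component $(a,b)\subset\{v>r\}\cap(0,1)$, observe that if both endpoints lie in $(0,1)$ then continuity gives $v(a)=v(b)=r$, while convexity on $[a,b]$ forces $v\le r$ on that interval, contradicting $v>r$ on $(a,b)$. Hence $\{v>r\}$ has at most one component, and combined with your argument that $0\in\{v=r\}$ it must be of the form $(\bar p,1]$ (or empty). Your derivative-comparison gluing at the now-unique $\bar p$ then becomes legitimate. With this patch your argument is arguably cleaner than the paper's, since it dispenses with the separate monotonicity ingredient altogether.
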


A substantive result of convexity is that even in the presence of ambiguity aversion the marginal value of \textit{good news} about the second project is increasing.

Next, we want to find a closed-form expression for the value function and particularly the cut-off probability $\bar{p}$. For this we make a technical assumption that turns out to be necessary and sufficient for existence of $\bar{p}$ in $(0,1)$. Namely, we exclude the case $\bar{p}=0$ where DM always pulls the second arm, and $\bar{p}=1$ where she never does.
\begin{assumption}
\label{ass: etaass}
Define $\eta:= \frac{r-\underline{\theta}}{\overline{\theta}-\underline{\theta}}+\frac{\sigma^2 \delta }{2\alpha(\overline{\theta}-\underline{\theta})}$. Then we assume $\eta < 1$.
\end{assumption}
As becomes clear later, one can think of $\eta$ as a lower bound on $\bar{p}$. Therefore $\eta >1$ essentially means that DM never selects the ambiguous arm. This is due to a combination of two forces, namely a large ratio of safe to ambiguous return --- that is the first term in $\eta$ --- and high normalized ambiguity cost --- that is the second term in $\eta$ --- which prevents the DM from exploring the second arm. Assumption \ref{ass: etaass} not only ensures that $\bar{p}<1$, but as it will turn out it implies $\bar{p}>0$. Having made this assumption, on \textit{exploration region} $(\bar{p},1]$ the following differential equation holds:
\begin{equation}
	v(p) = m(p)-\frac{\sigma^2 \delta}{2\alpha}+\frac{1}{2\delta}\Phi(p)v''(p)
\end{equation}
That has a general solution form\footnote{\cite{polyanin2017handbook} page 547.}
\begin{equation}
	 v(p)=m(p)-\frac{\sigma^2\delta}{2\alpha}+cp^{1-\lambda}(1-p)^\lambda, ~~ \text{on} ~ p\in (\bar{p},1].	
\end{equation}
Here $c$ is a constant determined from the boundary condition and $\lambda=\frac{1+\sqrt{1+4\delta \varphi^{-2}}}{2}$, where $\varphi:=(\bar{\theta}-\underline{\theta})/\sigma \sqrt{2}$.
The \textit{value-matching} (or equivalently \textit{no-arbitrage}) condition implies that the DM should be indifferent between choosing any of the two arms at $p=\bar{p}$. Therefore, $v(\bar{p})=r$ that yields to
\begin{equation}
\label{eq: genform}
	v(p) = m(p)-\frac{\sigma^2\delta}{2\alpha}+\left(r-m(\bar{p})+\frac{\sigma^2\delta}{2\alpha}\right)\frac{p^{1-\lambda}(1-p)^\lambda}{\bar{p}^{1-\lambda}(1-\bar{p})^\lambda},\quad \forall p \in [\bar{p},1]. 
\end{equation}
The DM faces a \textit{free-boundary} problem, namely she needs to find the optimal cut-off $\bar{p}$. For that we need to apply the \textit{smooth-pasting}\footnote{\cite{dixit2013art}.} condition that imposes the continuity of directional derivatives at $\bar{p}$, i.e $v'(\bar{p}^-) = v'(\bar{p}^+)$. Assumption \ref{ass: etaass} with some amount of algebra yields to the following expression for the cut-off probability:
\begin{equation}
\label{eq: pbar}
	\bar{p} = \frac{(\lambda-1)\eta}{\lambda-\eta}
\end{equation}
It is positive because $\eta < 1 \leq \lambda$, and is less than one again because $\eta<1$. This observation now supports making assumption \ref{ass: etaass}.

\begin{remark}
The value function in \eqref{eq: genform} with the prescribed $\bar{p}$ is continuous, increasing and convex. Therefore, its maximum derivative is attained at $p=1$, that is bounded above because $\lambda>1$, thereby satisfying the Lipschitz continuity. Hence, $v$ owns all the properties of the verification theorem \ref{thm: valuvis}.
\end{remark}

\noindent\textbf{Some comparative statics.} The cut-off value is lower-bounded by 
$\eta$. Further, it is increasing in $\eta$. Expression \eqref{eq: pbar} provides us with a sharp characterization of the cut-off value, and one could perform a number of comparative statics on $\bar{p}$ with respect to the parameters of the model. Here, we only point to two interesting ones. First, and more important is the effect of ambiguity on cut-off value. As DM becomes more ambiguity averse, namely as $\alpha$ becomes smaller, the value of $\bar{p}$ increases unambiguously. This confirms our intuition that a more ambiguity averse DM is more conservative and explores less. Expression \eqref{eq: pbar} offers a fine indicator on the \textit{extent} of this under-exploration. The second channel is the effect of $\ovl{\theta}-\unl{\theta}$, that represents the \textit{range} of possible return rates under the second arm. As this range shrinks to zero, the ambiguity cost is amplified more intensely, and DM will have less incentive to pick the second project.

As a last note in this section we point out to a concern on the entangled effects of $\sigma$ and $\alpha$. One might wonder that what we refer as the ambiguity aversion parameter, i.e $\alpha^{-1}$, can be dissolved in volatility $\sigma$, and thus can never be identified separately even with infinite amount of data. However, this is not true, as we can offer an identification scheme that disentangles $\alpha$ from $\sigma$. Suppose that all other parameters are identified, namely $r,\delta$ and $\{\bar{\theta},\underline{\theta}\}$. Then, a continuous stream of agent's belief process would let us to compute the quadratic variation $\angbrac{p}{p}=(\bar{\theta}-\underline{\theta})^2p^2(1-p)^2/\sigma^2$ from \eqref{eq: beliefmg}. Further, by spotting the point where she stops the exploration and pulls the safe arm we can back out $\bar{p}$. These two equations can lead us to uniquely identify $\sigma$ and $\alpha$.

\section{Value of unambiguous information}
\label{sec:value_of_unambiguous_information}
In this section we aim to study the value of information with respect to which the DM holds no ambiguity. Practically, one can think of a scenario in which the experimentation unit hires an expert to continuously provide her opinion about the \textit{true} rate of return of the ambiguous arm. Some questions naturally arise in this context. For example what is the fair price of such service? Or, how much must the expert be compensated for providing such information? When should the experimentation unit who faces ambiguity hire this expert?

To answer such questions, let $x_t$ be the information that the expert releases at time $t$ about $\theta$, which in its simplest case can be thought as the noisy signal of $\theta$, namely:
\begin{equation}
\label{eq: unambiguous_signa}
	\d x_t = \theta \d t+\gamma \d W_t
\end{equation}
In this expression $W$ is a $\mb{F}$-Brownian motion under the benchmark measure $\BP$ and is independent of $B$ and $\theta$. Further, $\gamma$ is the constant volatility that represents the level of DM's confidence in the expert's information. Therefore, the DM can use this signal in addition to the second arm's payoff process to update her belief about $\theta$. Obviously, this new source of information improves the precision of the filtering process, in the sense that it lowers the conditional variance of estimated $\theta$ at every point in time. The law of motion for the new posterior process with the presence of unambiguous information source follows the logic of lemma \ref{lem: optfilt}:
\begin{equation}
	\d p_t^h = p_t^h(1-p_t^h)\left(\bar{\theta}-\underline{\theta}\right)\left[\frac{\sqrt{\mu_t}}{\sigma}\d \bar{B}^h_t+\frac{1}{\gamma}\d \overline{W}_t\right]
\end{equation}
Here $\bar{B}$ and $\overline{W}$ are independent $\mb{F}^{y_2,\mu,h,x}$-Brownian motions under $\BP^h$.
Now we can state the counterpart of theorem \ref{thm: valuvis} in this case, however its proof is easier as the candidate solution belongs to the space of $C^2([0,1])$ thus we do not need the viscosity solution concept. This is owed to the fact that the diffusion coefficient for $\overline{W}$ is independent of $\mu$, thereby relaxing the degeneracy that appears when $\mu=0$. As a result of restriction to the space $C^2([0,1])$, Ito's lemma can be applied directly on the candidate value function and one can apply the idea of the proof in theorem \ref{thm: valuvis}, bypassing the steps dealing with viscosity super(sub)-solution and replacing them with Ito's rule.
\begin{proposition}
\label{prop: infovaluvis}
Suppose $\tilde{v} \in C^2([0,1])$ is the unique solution to the following HJBI equation:
\begin{equation}
    \tilde{v}(p)=\sup_{\mu \in [0,1]} \inf_{h \in \BR}\left\{(1-\mu)r+\mu m(p)+\sqrt{\mu}\sigma h+\frac{\alpha}{2\delta}h^2+\frac{1}{2\delta}\left(\mu\Phi(p;\sigma)+\Phi(p;\gamma)\right)\tilde{v}''(p)\right\}
\end{equation}
In that $\Phi(p;s) := \frac{(\bar{\theta}-\underline{\theta})^2}{s^2}p^2(1-p)^2$. Then, $\tilde{v}$ is indeed the value function in presence of unambiguous information $x$. In the equilibrium, the worst-case density generator is $h^* =-\alpha^{-1}\sigma \delta \sqrt{\mu^*}$, where $\mu^*$ is the DM's best response solving:
\begin{equation}
\label{eq: infohjb}
	\tilde{v}(p)=\sup_{\mu \in [0,1]}\left\{(1-\mu)r+\mu m(p)-\frac{\sigma^2 \delta}{2\alpha}\mu+\frac{1}{2\delta}\left(\mu\Phi(p;\sigma)+\Phi(p;\gamma)\right)\tilde{v}''(p)\right\}	
\end{equation}
\end{proposition}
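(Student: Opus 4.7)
My strategy is a classical verification argument that parallels the proof of Theorem~\ref{thm: valuvis} but is materially simpler: since $\tilde{v}\in C^2([0,1])$, It\^o's lemma applies directly and the viscosity machinery is unnecessary. The reason we are entitled to a $C^2$ solution here is that the expert's signal renders the diffusion of the posterior non-degenerate --- the $\Phi(p;\gamma)$ term is strictly positive on $(0,1)$ independently of $\mu$ --- so the underlying HJB operator is uniformly elliptic on compact subsets of $(0,1)$ and standard elliptic regularity delivers the smoothness assumed in the hypothesis.

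\emph{Step 1: Inner minimization.} The $h$-dependent part of the HJBI integrand is the quadratic $\sqrt{\mu}\sigma h+\frac{\alpha}{2\delta}h^2$. Its unique minimizer over $h\in\BR$ is $h^*(\mu)=-\alpha^{-1}\sigma\delta\sqrt{\mu}$, with minimum value $-\frac{\sigma^2\delta}{2\alpha}\mu$. Substituting back reduces the HJBI equation to \eqref{eq: infohjb} and identifies the equilibrium density generator.

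\emph{Step 2: It\^o verification.} Fix an admissible pair $(\mu,h)\in\mc{U}\times\mc{H}$ and work under $\BP^h$. The posterior $p^h$ has zero drift and instantaneous variance $\mu_t\Phi(p_t^h;\sigma)+\Phi(p_t^h;\gamma)$. Apply It\^o's formula to $e^{-\delta t}\tilde{v}(p_t^h)$ on $[0,T]$, take $\BP^h$-expectation (the resulting stochastic integrals are true martingales because $\tilde{v}'$ is bounded on the compact $[0,1]$), and use the HJBI equation to dominate the drift. This yields
\begin{equation*}
\tilde{v}(p) \;\geq\; \BE^h\!\left[e^{-\delta T}\tilde{v}(p_T^h)\right] + \BE^h\!\left[\delta\int_0^T e^{-\delta t}\!\left((1-\mu_t)r + \mu_t m(p_t^h) + \sqrt{\mu_t}\sigma h_t + \tfrac{\alpha}{2\delta}h_t^2\right)\d t\right].
\end{equation*}
Boundedness of $\tilde{v}$ on $[0,1]$ kills the first term as $T\to\infty$, and boundedness of $h$ (with $\mu\in[0,1]$) justifies dominated convergence on the second term, giving $\tilde{v}(p)\geq V(p;\mu,h)$. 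Taking $h = h^*(\mu)$ preserves the inequality as equality in Step~1's minimization, so $\tilde{v}(p)\geq \inf_{h}V(p;\mu,h)$ for every $\mu$. Conversely, if $\mu^*$ attains the supremum in \eqref{eq: infohjb} pointwise --- which is well defined because the reduced HJB is linear in $\mu$ and hence realized at an extreme point --- then each HJBI inequality becomes equality, and the same It\^o calculation yields $\tilde{v}(p) = V(p;\mu^*,h^*(\mu^*))$.

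\emph{Step 3 and main obstacle.} Combining both directions gives $\tilde{v}(p) = \sup_{\mu}\inf_{h} V(p;\mu,h)$, identifying $\tilde{v}$ with the value function defined analogously to \eqref{eq: supinfdef}. The most delicate point is the passage $T\to\infty$ under a \emph{family} of non-equivalent measures $\{\BP^h\}$; the key observation is that on each finite $[0,T]$ the Girsanov machinery developed in Section~\ref{sub:weak_formulation} applies, $h$ is bounded by admissibility, and $p^h$ lives in $[0,1]$ so all integrands are uniformly bounded --- so the limiting argument is routine rather than subtle. The conceptual payoff of the expert signal, beyond its economic content, is precisely that the nondegeneracy of $\Phi(\cdot;\gamma)$ keeps the entire argument inside the classical $C^2$ framework.
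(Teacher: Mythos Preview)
Your overall plan --- classical It\^o verification, bypassing viscosity because the expert diffusion keeps the posterior non-degenerate --- is exactly the paper's approach (it says so in the paragraph preceding the statement). But Step~2 has the quantifiers wrong and, as written, does not close. The displayed inequality $\tilde v(p)\ge\BE^h[\,\cdots\,]$ is asserted for an \emph{arbitrary} pair $(\mu,h)$; this is false. The HJBI only yields $\tilde v(p)\ge g(p,\mu,h)+\tfrac{1}{2\delta}(\mu\Phi+\Phi)\tilde v''(p)$ at the inner minimizer $h=h^*(\mu)$, while for $\mu=\mu^*$ and any $h\neq h^*$ the reverse inequality holds (just take $|h|$ large so the $\tfrac{\alpha}{2\delta}h^2$ term dominates). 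Worse, even granting the corrected one-sided bound $\tilde v\ge\sup_\mu\inf_h V$, your ``Conversely'' only delivers $\tilde v(p)=V(p;\mu^*,h^*(\mu^*))$, and since $V(p;\mu^*,h^*)\ge\inf_h V(p;\mu^*,h)$ is the wrong-way inequality, the upper bound $\tilde v\le\sup_\mu\inf_h V$ is never established.

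The repair, exactly as in the paper's proof of Theorem~\ref{thm: valuvis}, is to run the two directions with the correct specializations. For arbitrary $\mu$ set $h=h^*(\mu)$: then the HJBI gives drift $\le-\delta g$, so It\^o yields $\tilde v(p)\ge V(p;\mu,h^*(\mu))\ge\inf_h V(p;\mu,h)$, hence $\tilde v\ge\sup_\mu\inf_h V$. For $\mu=\mu^*$ and \emph{arbitrary} $h$: the HJBI gives $\tilde v(p)\le g(p,\mu^*,h)+\tfrac{1}{2\delta}(\mu^*\Phi+\Phi)\tilde v''(p)$ for every $h$ (since $h^*$ is the minimizer), so drift $\ge-\delta g$ and It\^o yields $\tilde v(p)\le V(p;\mu^*,h)$ for all $h$, hence $\tilde v\le\inf_h V(p;\mu^*,h)\le\sup_\mu\inf_h V$. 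With this swap your argument is complete and identical to the paper's.
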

Similar to the case with no source of unambiguous information, one can show that the value function is non-decreasing in $p$ and there is a cut-off rule for the optimal experimentation strategy. Let us denote the new cut-off in the presence of unambiguous information with $\tilde{p}$. Then, the value function satisfies the following relation:
\begin{align}
\label{eq: double_HJB}
		\tilde{v}(p)=\left\{ \begin{array}{ll}
		r+\delta^{-1}\varphi(\gamma)^2p^2(1-p)^2\tilde{v}''(p) & p < \tilde{p}\\
		m(p)-\frac{\sigma^2\delta}{2\alpha}+\delta^{-1}\left(\varphi(\sigma)^2+\varphi(\gamma)^2\right)\tilde{v}''(p) & p > \tilde{p}
		\end{array}\right.	
\end{align}
In that we define $\varphi(s) = \left(\bar{\theta}-\underline{\theta}\right)/s\sqrt{2}$, where $s \in \{\sigma,\gamma\}$. The top term in \eqref{eq: double_HJB} relates to the region where DM selects the safe arm. Importantly, on this region her payoff is no longer $r$, but has a continuation component that arises from the free information $x$. In the case without this source, once the DM switches to the safe arm, she will never have the chance to acquire information about $\theta$, thereby her payoff will stuck at $r$ forever. However, in the current situation, the news about $\theta$ can still be flowing without DM pulling the second arm, and in the case of \textit{good} news, she would expect to switch back to the second arm. This effect creates the continuation incentives for the DM on the region $(0,\tilde{p})$. At $\tilde{p}$ the continuity condition must hold so any of the two regions in \eqref{eq: double_HJB} could be enclosed. The solution to this piece-wise ordinary differential equation is
\begin{align}
		\tilde{v}(p)=\left\{ \begin{array}{ll}
		r+c_1p^{\lambda_1}(1-p)^{1-\lambda_1} & p < \tilde{p}\\
		m(p)-\frac{\sigma^2 \delta}{2\alpha}+c_2 p^{1-\lambda_2}(1-p)^{\lambda_2} & p > \tilde{p},
		\end{array}\right.	
\end{align}
where $\lambda_1 =\frac{1+\sqrt{1+4\delta \varphi(\gamma)^{-2}}}{2}$ and $\lambda_2=\frac{1+\sqrt{1+4\delta\left(\varphi(\sigma)^2+\varphi(\gamma)^2\right)^{-1} }}{2}$. There are essentially three parameters to be determined, i.e $(c_1,c_2,\tilde{p})$. The optimal choice of DM is to select these constants so that the three conditions, namely \textit{value-matching} (continuity), \textit{smooth-pasting} (continuity of first derivative) and \textit{super-contact} (continuity of second derivative) hold together. The derivations for this are presented in \ref{subsec: constants_der}. It turns out the new cut-off probability under unambiguous information source is
\begin{equation}
\label{eq: ptilde}
	\tilde{p}=\frac{(\Lambda-1)\eta}{\Lambda-\eta},\quad \text{for }\Lambda:=1+\lambda_1\frac{\sigma^2}{\gamma^2}+(\lambda_2-1)\left(1+\frac{\sigma^2}{\gamma^2}\right).
\end{equation}

\begin{proposition}
\label{prop: Compar_cutoff}
The experimentation cut-off rises unambiguously when there is an unambiguous information source, namely $\tilde{p}\geq \bar{p}$ for all combinations of the variables in the model.
\end{proposition}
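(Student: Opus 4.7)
The plan is to exploit the closed-form expressions \eqref{eq: pbar} and \eqref{eq: ptilde}. Both $\bar p$ and $\tilde p$ are of the form $f(x):=\eta(x-1)/(x-\eta)$, evaluated at $x=\lambda$ and $x=\Lambda$ respectively. A one-line derivative computation gives
\[
f'(x) \;=\; \frac{\eta(1-\eta)}{(x-\eta)^2},
\]
which is strictly positive because Assumption \ref{ass: etaass} forces $\eta \in (0,1)$ (the second summand in $\eta$ is positive, so $\eta>0$). Since $\lambda,\lambda_1,\lambda_2 \geq 1 > \eta$, the domain assumption $x>\eta$ is in force for both cut-offs, so the monotonicity of $f$ reduces the proposition to the purely algebraic claim $\Lambda \geq \lambda$.

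Next I will recast this claim in a more symmetric form. Set $a := \varphi(\sigma)^2 = (\ovl\theta-\unl\theta)^2/(2\sigma^2)$ and $b := \varphi(\gamma)^2 = (\ovl\theta-\unl\theta)^2/(2\gamma^2)$, so that $\sigma^2/\gamma^2 = b/a$ and
\[
\lambda = \tfrac{1}{2}\bigl(1+\sqrt{1+4\delta/a}\bigr),\qquad \lambda_1 = \tfrac{1}{2}\bigl(1+\sqrt{1+4\delta/b}\bigr),\qquad \lambda_2 = \tfrac{1}{2}\bigl(1+\sqrt{1+4\delta/(a+b)}\bigr).
\]
Plugging these into the definition of $\Lambda$ in \eqref{eq: ptilde} and clearing the $2a$ denominator, the inequality $\Lambda \geq \lambda$ collapses, after routine bookkeeping that cancels the additive constants, to
\[
\sqrt{b(b+4\delta)} \;+\; \sqrt{(a+b)(a+b+4\delta)} \;\geq\; \sqrt{a(a+4\delta)}.
\]

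The final step is immediate: the map $t\mapsto t(t+4\delta)$ is increasing on $[0,\infty)$, and $a+b\geq a$, so $\sqrt{(a+b)(a+b+4\delta)}\geq \sqrt{a(a+4\delta)}$ already dominates the right-hand side by itself; the additional non-negative term $\sqrt{b(b+4\delta)}$ is a surplus. This closes the argument, with equality only in the degenerate case $b=0$ (no signal) combined with $\delta = 0$ or $a=0$.

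The only obstacle is bookkeeping in the reduction from $\Lambda \geq \lambda$ to the square-root inequality; there is no genuine analytical subtlety once the cut-offs are available in closed form. As a bonus, the proof reveals that it is the $\lambda_2$-contribution in $\Lambda$ that already pushes the threshold above $\bar p$: the term involving $\lambda_1$ (which captures the continuation value accrued while on the safe arm through the expert's signal) only reinforces the effect rather than causing it.
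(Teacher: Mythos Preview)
Your proof is correct and shares the paper's overall strategy: both reduce the claim to the algebraic inequality $\Lambda \geq \lambda$ and then use the monotonicity of $x \mapsto \eta(x-1)/(x-\eta)$ on $(\eta,\infty)$. The difference lies in the verification of $\Lambda \geq \lambda$. The paper parameterizes by $\beta = 8\delta/(\ovl\theta-\unl\theta)^2$ and the ratio $\sigma^2/\gamma^2$, arrives at an inequality involving three square roots, and proves it in two steps: first bounding the left-hand side below via $\gamma^2 \geq \sigma^2\gamma^2/(\sigma^2+\gamma^2)$, and then checking a residual sufficient condition. Your substitution $a=\varphi(\sigma)^2$, $b=\varphi(\gamma)^2$ is more symmetric and collapses the inequality directly to $\sqrt{b(b+4\delta)}+\sqrt{(a+b)(a+b+4\delta)}\geq\sqrt{a(a+4\delta)}$, where the second summand alone already dominates the right-hand side by monotonicity of $t\mapsto t(t+4\delta)$. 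This is cleaner and, as you observe, isolates the structural point that the $\lambda_2$-contribution in $\Lambda$ is by itself responsible for pushing the threshold above $\bar p$, with the $\lambda_1$-term merely reinforcing the effect.
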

The content behind this proposition is that the unambiguous source of information in effect raises the bar for exploration, that in turn means DM demands more confidence for selecting the second project. This is very much due to the free information that DM can acquire about $\theta$ without pulling the ambiguous arm. In the standard case, the only way to learn about the quality of the second project is to spend some time exploring that. Therefore, the DM is more willing to sacrifice the certain payoff of the first project to learn about the second one, whereas in the current case she can wait longer for the good news (and exploit the first arm meanwhile) to choose the second arm. In this spirit, as depicted in figure \ref{fig:fig1} the cut-off value rises unambiguously due to the provision of the new information source (i.e $\tilde{p}>\bar{p}$). Also it shows that in both environments the exploration threshold falls as the DM becomes less ambiguity averse, meaning larger values of $\alpha$.
\begin{figure}
    \centering
    \includegraphics[scale=0.7]{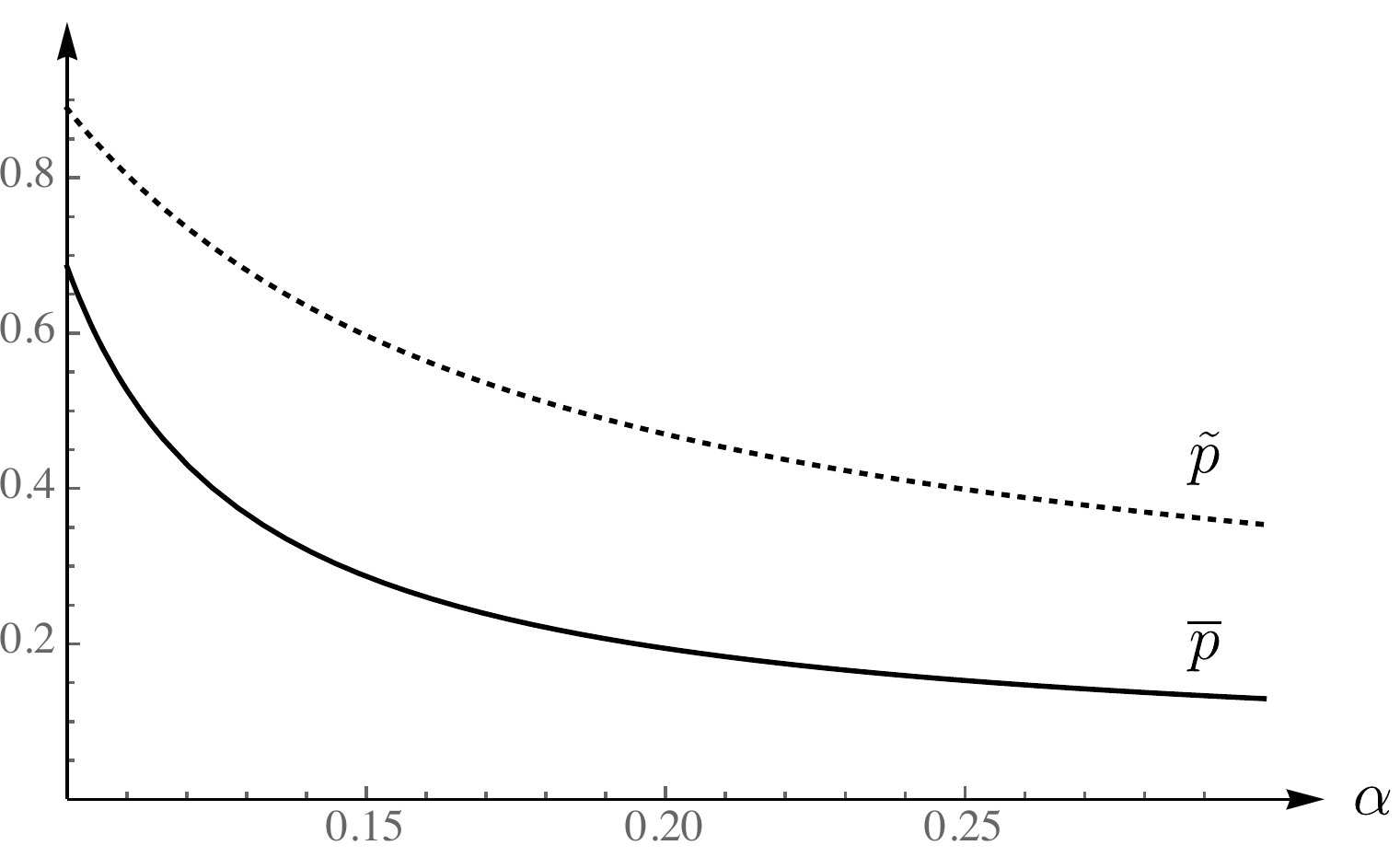}
    \caption{Cut-off values}
    \medskip
    \small
    $\left[r=0.2,\unl{\theta}=0, \ovl{\theta}=1, \delta=0.9,\sigma=0.4, \gamma=0.3\right]$
    \label{fig:fig2}
\end{figure}
One can think of a situation where the provider of this new source of information is strategic and can charge the DM for the service. Then naturally the maximum price that she can charge is $\tilde{v}(p)-v(p)$, which corresponds to extracting all the surplus from the DM. From the social welfare standpoint the $p$ that maximizes the surplus shall be treated as a benchmark for decision to hire the expert. We refer to $\tilde{v}(p)-v(p)$ as the created surplus due the expert opinion. It is obviously positive and continuous everywhere, and is increasing over $[0,\bar{p}]$. Also as $p \to 1$ it decays to zero faster than $(1-p)^{\lambda_2}$. Therefore, the maximum created surplus occurs at a \textit{moderate} belief value $p^*$, where $p^* > \bar{p}$ but is not also very close to one. Figure \ref{fig:fig2} presents both value functions, and the created surplus. In that the blue segment of each curve points to the region where the DM pulls the safe arm. We end this section with a remark about the most efficient time to hire an expert.
\begin{remark}
The above analysis implies that it is most beneficial for the experimentation unit to hire an expert when otherwise they would select the ambiguous arm in spite of strong enough evidence and belief.
\end{remark} 
\begin{figure}
    \centering
    \includegraphics[scale=0.7]{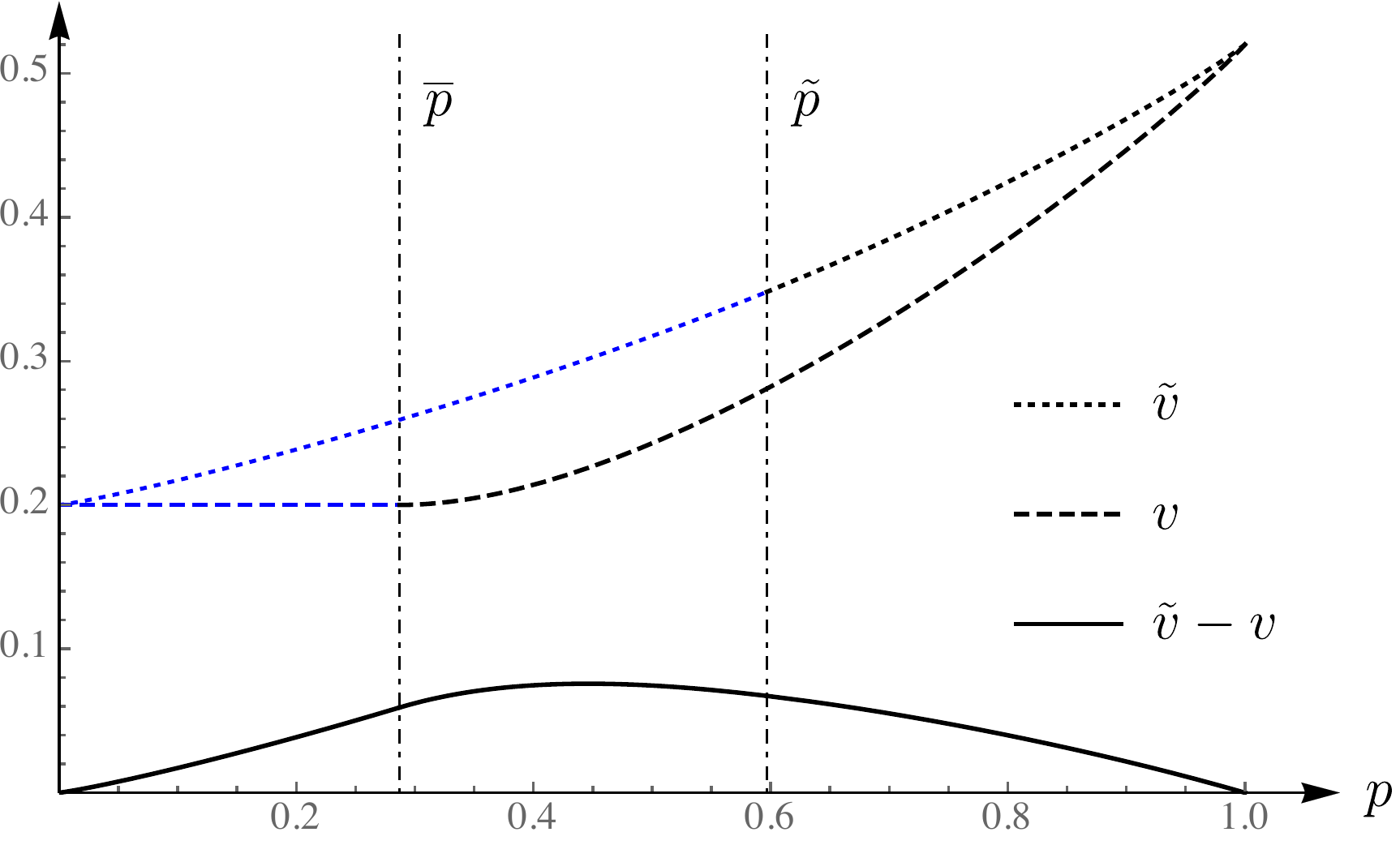}
    \caption{Created surplus and value functions}
    \medskip
    \small
    $\left[r=0.2,\unl{\theta}=0, \ovl{\theta}=1, \delta=0.9,\sigma=0.4, \gamma=0.3, \alpha=0.14\right]$
    \label{fig:fig1}
\end{figure}

\section{Concluding remarks} 
\label{sec:concluding remarks}
How does a decision maker who is \textit{uncertain} about the payoff distribution of two alternative choices operate the dynamics of experimentation? Understanding how an ambiguity averse agent values a project and determining the \textit{price} of ambiguity are particularly important when the experimentation task is delegated to such agent. In this paper, we develop a dynamic decision making framework that offers closed-form characterizations for the agent's optimal strategy as well as her valuation. Specifically, we assumed the DM has Multiplier preferences, that consists of two components. The discounted expected future return from both arms, and a penalty term that captures the extent of perturbation of probability specification relative to the benchmark model. We framed the decision making environment as a two-player differential game that DM plays against the nature, and found a closed-form expression for DM's value function in terms of her belief. Also, we have shown that in the equilibrium her optimal strategy is to select the safe arm of the project whenever her belief drops below a certain threshold, the value of which is controlled by all the parameters of the model and specifically the ambiguity aversion index. Our analysis offers sharp results on how much an ambiguity averse DM must be compensated to act as if she is not subject to ambiguity. In particular, one can send $\alpha \to \infty$ in the results of section \ref{sec: valuefunction} to predict the behavior of an ambiguity neutral agent. Finally, we explored the effect of an unambiguous constantly flowing information source in the dynamics of experimentation. It turned out that the exploration cut-off rises as a result of such provision, namely the DM waits longer to receive good news about the ambiguous arm of the project. We investigated the generated surplus due to this additional source and offered policy analysis on the efficient time to recruit an external expert to guide the experimentation process.

\newpage
\appendix
\section{Proofs}
\label{sec: proofs}
\subsection{Proof of lemma \ref{lem: wellentr}}
For every finite $T$ and $h\in \mc{H}$ the integral can be simplified as:
\begin{equation}
	\begin{split}
		\delta \int_0^T e^{-\delta t} H\left(\BP^{h}_t ; \BP_t\right)\d t  &= \delta \int_0^T e^{-\delta t} \BE^{h}\left[\log L_t^{h}\right]\d t\\
		&=\delta \int_0^T e^{-\delta t}\BE^{h}\left[\left(h \cdot B\right)_t-\frac{1}{2}\left(h^2 \cdot \imath\right)_t\right]\d t\\
		&=\delta \int_0^T e^{-\delta t}\BE^{h}\left[\left(h \cdot B^{h}\right)_t+\frac{1}{2}\left(h^2 \cdot \imath\right)_t\right]\d t
	\end{split}
\end{equation}
Since $\{B^{h}_t: t\leq T\}$ is $\BP^{h}_T$-Brownian motion and $h$ is bounded, the first term in above has zero expectation, leaving us only with the second term, for which integration by part yields
\begin{equation}
\label{appeq: finitent}
	\BE^{h}\left[\frac{\delta}{2} \int_0^T e^{-\delta t}\left(h^2 \cdot \imath\right)_t \d t\right] = \BE^{h}\left[-\frac{1}{2}e^{-\delta T}(h^2 \cdot \imath)_T+\frac{1}{2}\int_0^T e^{-\delta t}h^2_t \d t\right].
\end{equation}
The first term inside expectation is uniformly bounded over $\Omega \times \BR_+$ and goes to zero in a point-wise sense as $T \to \infty$. Therefore,
\begin{equation}
	H\left(\BP^h; \BP\right)= \frac{1}{2}\lim_{T \to \infty}\BE^{h}\left[\int_0^T e^{-\delta t}h^2_t \d t\right] = \frac{1}{2}\BE^{h}\left[\int_0^\infty e^{-\delta t}h^2_t \d t\right],
\end{equation}
where in the last relation we used the monotone convergence theorem. The limit is finite due to the boundedness of $h \in \mc{H}$. It is worthwhile to point out that for integrals with finite upper limit $T$ we appeal to $\BP^{h}_T$, and for the infinite time integral we use $\BP^{h}$. This replacement does not cause any problem because of the consistency of $\{\BP^{h}_T: T \in \BR_+\}$ with $\BP^{h}$ as explained in item \ref{item: prob_cons} of subsection \ref{sub:weak_formulation}.\qed

\subsection{Proof of proposition \ref{prop: payoffrep}}
Let us define 
\begin{equation}
\label{appeq: finiteV}
	V^T(p;\mu,h):=\BE^h\left[\delta \int_0^T e^{-\delta t} \left(\d y_{1,t}+\d y_{2,t}+\alpha H\left(P_t^h;P_t\right)\d t\right)\right].	
\end{equation}
For the first two components of \eqref{appeq: finiteV}, one just need to recall that over every finite interval $[0,T]$, the pair $\left\{B^{h}_t, \mc{F}_t: t\leq T\right\}$ is a Brownian motion under $\BP^h$. Consequently, stochastic integrals of bounded processes with respect to that are martingales and hence average out to zero. Using equation \eqref{eq: filteredy2} yields to
\begin{equation}
\label{appeq: y1y2h}
	\begin{split}
		\BE^h\left[\delta \int_0^Te^{-\delta t} \left(\d y_{1,t}+\d y_{2,t}\right)\right]&=\BE^h\left[\delta \int_0^T e^{-\delta t}\left\{\left((1-\mu_t)r+\mu_tm(p_t^h)+\sigma \sqrt{\mu_t}h_t\right)\d t+ \sigma \sqrt{\mu_t}\ d \bar{B}^h_t\right\}\right]\\
		&=\BE^h\left[\delta \int_0^T e^{-\delta t}\left((1-\mu_t)r+\mu_tm(p_t^h)+\sigma \sqrt{\mu_t}h_t\right)\d t \right]
	\end{split}	
\end{equation}
The entropy component of the integrand in \eqref{appeq: finiteV} has already been analyzed in the proof of lemma \ref{lem: wellentr} and specifically in equation \eqref{appeq: finitent}. Combining that analysis with \eqref{appeq: y1y2h} leads to
\begin{equation}
\label{appeq: VTsimplified}
	\begin{split}
		V^T(p;\mu,h)&=\BE^{h}\left[\delta \int_0^T e^{-\delta t}\left((1-\mu_t)r+ \mu_tm(p_t^h) +\sigma \sqrt{\mu_t}h_t+\frac{\alpha}{2\delta}h^2_t\right)\d t \right]\\
		&-\frac{1}{2}e^{-\delta T}\BE^h\left[(h^2 \cdot \imath)_T\right]
	\end{split}	
\end{equation}
Since $h \in \mc{H}$ is bounded, the second term in \eqref{appeq: VTsimplified} vanishes as $T \to \infty$. For the first term in \eqref{appeq: VTsimplified} we apply the dominated convergence theorem and use the fact that $\left\{\BP_T^{h}: T \in \BR_+\right\}$ is consistent with $\BP^{h} \in  \Delta\left(\Omega,\mc{F}_\infty\right)$\footnote{$\Delta\left(\Omega,\mc{F}\right)$ denotes the set of all probability measures on the measure space $\left(\Omega,\mc{F}\right)$.} --- as explained in \ref{item: prob_cons} of subsection \ref{sub:weak_formulation} --- while sending $T \to \infty$. This concludes the proof of $\lim_{T\to \infty} V^T(p;\mu,h) = V(p;\mu,h)$, thereby leading to \eqref{eq: payoffrep}.\qed

\subsection{Proof of theorem \ref{thm: valuvis}}
For the proof of this theorem we need the following lemma proved in \cite{zhou1997stochastic} using the Mollification method.
\begin{lemma}
\label{lem: mollif}
Let $w\in C([0,1])$ be a given Lipschitz function. For every $x_0 \in [0,1)$, with $(\xi_1,\xi_2) \in D_+w(x_0)$ (resp. $(\xi_1,\xi_2)\in D_-w(x_0)$), there exists a twice-continuously differentiable function $\psi\in C^2([0,1])$, that satisfies:
\begin{enumerate}[label=(\roman*)]
    \item $\psi(x_0)=w(x_0)$ and $\psi(x)> w(x)$ (resp. $\psi(x)< w(x)$) everywhere else.
    \item $\psi'(x_0)=\xi_1$ and $\psi''(x_0)=\xi_2$.
\end{enumerate}
\end{lemma}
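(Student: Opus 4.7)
The plan is to construct $\psi$ as the Taylor-type polynomial supplied by the superdifferential condition, corrected by a small smooth non-negative bump that vanishes to order strictly greater than two at $x_0$, plus a strictly positive perturbation to upgrade the touching to a strict inequality off $x_0$.

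First I would set $\phi(x) := w(x_0) + (x-x_0)\xi_1 + \tfrac{1}{2}(x-x_0)^2\xi_2$ and define the residual $\rho(x) := w(x) - \phi(x)$. Membership $(\xi_1,\xi_2) \in D_+w(x_0)$ is precisely $\limsup_{x \to x_0} \rho(x)/(x-x_0)^2 \leq 0$, which translates into $\rho(x)^+ \leq \varepsilon (x-x_0)^2$ for every $\varepsilon > 0$ on a sufficiently small neighborhood of $x_0$. From this I would form the one-sided modulus
\[ M(r) := \sup\bigl\{\rho(x)^+ : x \in [0,1],\ |x-x_0| \leq r\bigr\},\qquad r \in [0,1]. \]
By continuity of $w$ on a compact interval, $M$ is bounded, non-decreasing, right-continuous, $M(0) = 0$, and the superdifferential condition yields $M(r) = o(r^2)$ as $r \to 0^+$.

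The crux of the argument is to produce a $C^2$ majorant of $M$ that vanishes to order strictly greater than two at the origin. I would introduce $\eta(r) := \sup_{0 < s \leq r} M(s)/s^2$, which is bounded, non-decreasing, and tends to $0$ as $r \to 0^+$ by the $o(r^2)$ bound, so that $M(r) \leq \eta(r)\,r^2$ throughout $[0,1]$. Next I would smooth $\eta$ into a non-decreasing $C^2$ function $\tilde\eta$ on $[0,1]$ with $\tilde\eta \geq \eta$ and $\tilde\eta(0) = 0$ by the mollification technique cited from Zhou---convolving $\eta$ with a non-negative smooth kernel whose support shrinks as one approaches the origin, then iterating so enough derivatives are gained. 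Setting $\omega(r) := r^2\,\tilde\eta(r)$ produces a $C^2$ function on $[0,1]$ with $\omega(r) \geq M(r)$, and a direct differentiation using $\tilde\eta(0)=0$ gives $\omega(0) = \omega'(0) = \omega''(0) = 0$.

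With $\omega$ in hand, I would assemble the candidate $\psi(x) := \phi(x) + \omega(|x-x_0|) + \varepsilon(x-x_0)^4$ for a small fixed $\varepsilon > 0$. The composition $\omega(|x-x_0|)$ is $C^2$ at $x_0$ because $\omega'(0) = 0$ makes the one-sided first derivatives agree, and the symmetric second derivative at $x_0$ is $\omega''(0) = 0$. At $x_0$ the three summands contribute $w(x_0), 0, 0$ to the value, $\xi_1, 0, 0$ to the first derivative, and $\xi_2, 0, 0$ to the second, delivering property (ii); for $x \neq x_0$, the quartic term gives the strict inequality $\psi(x) > \phi(x) + M(|x-x_0|) \geq \phi(x) + \rho(x)^+ \geq w(x)$, which is property (i). The subdifferential case follows the same scheme with $-\rho$ and $-\omega$ in place of $\rho$ and $\omega$. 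The main obstacle throughout is the $C^2$ construction of $\tilde\eta$ from a mere bounded non-decreasing function vanishing at $0$; one must verify that the smoothing step preserves monotonicity and the vanishing at the origin while genuinely producing two continuous derivatives up to $r=0$, and this is exactly the content of the mollification argument referenced from Zhou.
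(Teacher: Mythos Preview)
The paper does not actually prove this lemma; it simply states that it is ``proved in \cite{zhou1997stochastic} using the Mollification method'' and moves on. Your proposal therefore already contains far more than the paper offers.

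Your sketch follows the standard route for converting a second-order superjet into a $C^2$ test function: form the quadratic $\phi$, control the positive part of the residual by a one-sided modulus $M(r)=o(r^2)$, majorize $M$ by a $C^2$ function $\omega$ vanishing to second order at the origin, and add a quartic bump to upgrade touching to strict domination. The derivative computations you give---in particular $\omega'(0)=\omega''(0)=0$ from $\tilde\eta(0)=0$, and the resulting $C^2$ regularity of $x\mapsto\omega(|x-x_0|)$---are correct, and the chain of inequalities for property~(i) is clean. The only genuinely delicate step is the construction of a $C^2$ majorant $\tilde\eta\geq\eta$ with $\tilde\eta(0)=0$ from a merely bounded non-decreasing $\eta$ vanishing at the origin; you correctly isolate this as the crux and defer it to the same reference the paper cites. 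One small remark: the Lipschitz hypothesis on $w$ plays no role in your argument---continuity on the compact interval is all you use---so your construction is in fact slightly more general than the stated lemma requires.
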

To continue the proof of the theorem assume $\exists w \in C([0,1])$ satisfying all presumptions of the theorem. We also use the following notation throughout the proof:
\begin{equation}
    g(p,\mu,h):= (1-\mu)r+\mu m(p)+\sigma\sqrt{\mu} h+\frac{\alpha}{2\delta}h^2
\end{equation}
Recall that $\{p_t^h\}$ follows the diffusion process $\d p_t^h=\sqrt{\mu_t\Phi(p_t^h)}\d \bar{B}_t^h$, where $\bar{B}^h$ is a $\mc{G}$-Brownian motion under $\BP^h$ over any finite horizon. Suppose at $t=0$, $p_t^h=p$, and $\left(\partial_-w(p),\partial^2_-w(p)\right) \in D_-w(p)$, then from lemma \ref{lem: mollif} one can find $\psi \in C^2([0,1])$ such that $\psi(p)=w(p)$, $\psi(x)< w(x)$ elsewhere, $\psi'(p)=\partial_-w(p)$ and $\psi''(p)=\partial^2_-w(p)$. Then, for every $t>0$:
\begin{equation}
    \begin{split}
        e^{-\delta t}w(p_{t}^h)-w(p) &\geq e^{-\delta t} \psi(p_t^h)-\psi(p)\\
        &=\int_0^t e^{-\delta s}\left[ \sqrt{\mu_s\Phi(p_s^h)} \psi'(p_s^h) \d \bar{B}_s^h+\left(\frac{1}{2}\mu_s \Phi(p_s^h)\psi''(p_s^h)-\delta \psi(p_s^h)\right)\d s\right]
    \end{split}
\end{equation}
Therefore, taking the expectation w.r.t $\BP^h$ on both sides and using the martingale property of $\bar{B}^h$ lead to:
\begin{equation}
    \begin{split}
        \frac{1}{t}\left( \BE^h\left[e^{-\delta t}w(p_t^h)\right]-w(p)\right)\geq \frac{1}{t}\int_0^t e^{-\delta s}\BE^h\left[\frac{1}{2}\mu_s \Phi(p_s^h)\psi''(p_s^h)-\delta \psi(p_s^h)\right]\d s
    \end{split}
\end{equation}
Since $p^h_s\to p$, $\BP^h$-almost surely as $s\to 0$, and hence in distribution, then taking the limit on both sides as $t\to 0$ yields:
\begin{equation}
    \begin{split}
        \liminf_{t\to 0}\frac{1}{t}\left( \BE^h\left[e^{-\delta t}w(p_t^h)\right]-w(p)\right)&\geq \frac{1}{2}\mu \Phi(p)\psi''(p)-\delta \psi(p)\\
        &=\frac{1}{2}\mu \Phi(p)\partial^2_- w(p)-\delta w(p)
    \end{split}
\end{equation}
Let $\mu=\mu^*$ in the above expression. Since $w$ is the viscosity solution for \eqref{eq: HJBI}, then from the \textit{supersolution} property \eqref{eq: supersol} it holds that $\inf_h\left\{g(p,\mu^*,h)+\frac{\mu^*}{2\delta}\Phi(p)\partial^2_-w(p)-w(p)\right\}\geq 0$, therefore for every $h$:
\begin{equation}
    \frac{1}{2}\mu^* \Phi(p)\partial^2_- w(p)-\delta w(p) \geq -\delta g(p,\mu^*,h).
\end{equation}
Combining the last two equations amounts to
\begin{equation}
\label{eq: Ito_replace}
    \liminf_{t\to 0}\frac{1}{t}\left( \BE^h\left[e^{-\delta t}w(p_t^h)\right]-w(p)\right) \geq -\delta g(p,\mu^*,h).
\end{equation}
This is a fundamental implication that one would have obtained much easier under Ito's lemma if $w$ was twice continuously differentiable. 

Next, for every $t>0$
\begin{equation}
\label{eq: Yosida_recipe}
    \begin{split}
        \BE^h\left[e^{-\delta t}w(p_t^h)\right]-w(p) &= \lim_{\ve \to 0} \frac{1}{\ve}\left(\int_t^{t+\ve} \BE^h\left[e^{-\delta s}w(p_s^h)\right]\d s -\int_0^\ve \BE^h\left[e^{-\delta s}w(p_s^h)\right] \d s\right)\\
        &=\lim_{\ve \to 0} \frac{1}{\ve}\left(\int_\ve^{t+\ve} \BE^h\left[e^{-\delta s}w(p_s^h)\right]\d s -\int_0^t \BE^h\left[e^{-\delta s}w(p_s^h)\right] \d s\right)\\
        &=\lim_{\ve \to 0} \int_0^t \frac{\BE^h\left[e^{-\delta (s+\ve)}w(p_{s+\ve}^h)\right]-\BE^h\left[e^{-\delta s}w(p_s^h)\right]}{\ve}\d s\\
        &\geq \int_0^t \liminf_{\ve \to 0} \frac{\BE^h\left[e^{-\delta (s+\ve)}w(p_{s+\ve}^h)\right]-\BE^h\left[e^{-\delta s}w(p_s^h)\right]}{\ve}\d s\\
        &\geq -\delta \int_0^t \BE^h\left[e^{-\delta s}g(p_s^h,\mu^*_s,h_s)\right]\d s,
    \end{split}
\end{equation}
where in the second last inequality we used the Fatou's lemma, given that $w$ is bounded from above, and in the last inequality we used the inequality \eqref{eq: Ito_replace}. Rearranging the above terms implies that for every $t>0$
\begin{equation}
    w(p) \leq  \BE^h \left[\int_0^t \delta e^{-\delta s}g(p_s^h,\mu^*_s,h_s)\d s\right]+\BE^h\left[e^{-\delta t}w(p_t^h)\right]
\end{equation}
Since every $h \in \mc{H}$ is assumed bounded, then one can use dominated convergence theorem and send $t\to \infty$ to obtain
\begin{equation}
    w(p) \leq \BE^h \left[\int_0^\infty \delta e^{-\delta s}g(p_s^h,\mu^*_s,h_s)\d s\right].
\end{equation}
Taking the infimum over all $h \in \mc{H}$ thus implies
\begin{equation}
    w(p) \leq \inf_{h \in \mc{H}} \BE^h \left[\int_0^\infty \delta e^{-\delta s}g(p_s^h,\mu^*_s,h_s)\d s\right],
\end{equation}
and consequently
\begin{equation}
\label{eq: verif_upper}
    w(p) \leq \sup_{\mu}\inf_{h \in \mc{H}} \BE^h \left[\int_0^\infty \delta e^{-\delta s}g(p_s^h,\mu_s,h_s)\d s\right].
\end{equation}
For the reverse direction of the above inequality we shall use the superdifferentials of $w$ at $p$ and the viscosity subsolution inequality \eqref{eq: subsol}. Let $\left(\partial_+w(p),\partial^2_+ w(p)\right)\in D_+w(p)$.  Using an analogous argument as above we reach:
\begin{equation}
    \limsup_{t\to 0}\frac{1}{t}\left( \BE^h\left[e^{-\delta t}w(p_t^h)\right]-w(p)\right) \leq \frac{1}{2}\mu \Phi(p)\partial^2_+ w(p)-\delta w(p) 
\end{equation}
Choose an arbitrary $\mu \in \mc{U}$ and set $h=\tilde{h}=-\alpha^{-1}\sigma \delta \sqrt{\mu}$ ---  the point achieving the infimum in the HJBI equation. Because of the subsolution property of $w$, it holds that $g(p,\mu,\tilde{h})+\frac{\mu}{2\delta}\Phi(p)\partial^2_+w(p)-w(p) \leq 0$, therefore
\begin{equation}
    \limsup_{t\to 0}\frac{1}{t}\left( \BE^h\left[e^{-\delta t}w(p_t^h)\right]-w(p)\right) \leq -\delta g(p,\mu,\tilde{h}).
\end{equation}
Employing the same recipe of \eqref{eq: Yosida_recipe} -- this time with limsup instead of liminf in the Fatou's lemma and using the lower bound instead of upper bound on $w$ -- we get
\begin{equation}
    w(p) \geq   \BE^{\tilde{h}} \left[\int_0^t \delta e^{-\delta s}g(p_s^{\tilde{h}},\mu_s,\tilde{h}_s)\d s\right]+\BE^{\tilde{h}}\left[e^{-\delta t}w(p_t^{\tilde{h}})\right].
\end{equation}
Using the dominated convergence theorem to send $t\to \infty$ implies
\begin{equation}
    w(p) \geq \BE^{\tilde{h}} \left[\int_0^\infty \delta e^{-\delta s}g(p_s^{\tilde{h}},\mu_s,\tilde{h}_s)\d s\right] \Rightarrow w(p) \geq \inf_{h \in \mc{H}} \BE^h \left[\int_0^\infty \delta e^{-\delta s}g(p_s^h,\mu_s,h_s)\d s\right].
\end{equation}
Taking the supremum over all $\mu \in \mc{U}$ yields 
\begin{equation}
\label{eq: verif_lower}
    w(p) \geq \sup_{\mu}\inf_{h \in \mc{H}} \BE^h \left[\int_0^\infty \delta e^{-\delta s}g(p_s^h,\mu_s,h_s)\d s\right].
\end{equation}
Equations \eqref{eq: verif_upper} and \eqref{eq: verif_lower} together imply that $w=v$, that concludes the verification proof.\qed


\subsection{Proof of theorem \ref{thm: vconv}}
For the proof of this proposition we need few lemmas.
\begin{lemma}
For any $p \in (0,1)$ the value function is lower bounded by $\max\left\{r,m(p)-\frac{\sigma^2\delta}{2\alpha}\right\}$.
\end{lemma}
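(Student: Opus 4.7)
The plan is to establish each of the two lower bounds $v(p)\geq r$ and $v(p)\geq m(p)-\sigma^2\delta/(2\alpha)$ separately by exhibiting two \emph{feasible constant strategies} for the DM and computing nature's worst response against each. Since $v(p)=\sup_{\mu \in \mc{U}}\inf_{h\in \mc{H}} V(p;\mu,h)$, any admissible $\mu \in \mc{U}$ furnishes a lower bound via $\inf_{h} V(p;\mu,h)$.

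For the bound $v(p)\geq r$, I take $\mu\equiv 0$. By the filtering equation \eqref{eq: beliefmg}, $p^h_t\equiv p$ a.s., so using the payoff representation in Proposition \ref{prop: payoffrep},
\begin{equation*}
V(p;0,h)=\BE^{h}\!\left[\delta \int_0^\infty e^{-\delta t}\Bigl(r+\tfrac{\alpha}{2\delta}h_t^2\Bigr)\d t\right]\geq r,
\end{equation*}
with equality when $h\equiv 0$ (which is trivially bounded, hence lies in $\mc{H}$). Taking the infimum over $h$ and then supremum over $\mu$ gives $v(p)\geq r$.

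For the bound $v(p)\geq m(p)-\sigma^2\delta/(2\alpha)$, I take $\mu\equiv 1$. The crucial observation is that, regardless of the nature's choice $h\in \mc{H}$, lemma \ref{lem: optfilt} shows that $p^h$ is a \emph{bounded} $\mb{F}^{y_2,\mu,h}$-martingale under $\BP^h$ (its drift under $\BP^h$ vanishes and $p^h_t \in [0,1]$), so $\BE^{h}[m(p^h_t)]=m(p)$ for every $t\geq 0$. Consequently, by Fubini,
\begin{equation*}
V(p;1,h)=m(p)+\BE^{h}\!\left[\delta\int_0^\infty e^{-\delta t}\Bigl(\sigma h_t+\tfrac{\alpha}{2\delta}h_t^2\Bigr)\d t\right].
\end{equation*}
The integrand is a strictly convex quadratic in $h_t$ minimized pointwise at the constant $h^{*}=-\sigma\delta/\alpha$, which is bounded, hence admissible. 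Plugging $h^*$ in yields the pointwise value $-\sigma^2\delta/(2\alpha)$, so $\inf_{h\in \mc{H}}V(p;1,h)=m(p)-\sigma^2\delta/(2\alpha)$, and therefore $v(p)\geq m(p)-\sigma^2\delta/(2\alpha)$.

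Combining the two inequalities produces the stated lower bound. The only mildly delicate point is verifying that the minimizing choices $h\equiv 0$ and $h\equiv -\sigma\delta/\alpha$ lie in the admissible class $\mc{H}$ of bounded progressive processes (they are constants, so this is immediate) and that the interchange of expectation and time integral is permitted under $\BP^h$ (which follows from boundedness of $p^h$ and of $h^*$ together with the exponential discount). Neither presents a real obstacle.
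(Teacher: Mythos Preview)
Your proof is correct and follows essentially the same route as the paper: both establish the bound by restricting to constant (here, $\mc{G}_0$-measurable) strategies $\mu\equiv 0$ and $\mu\equiv 1$ and invoking the $\BP^h$-martingale property of $p^h$ from lemma~\ref{lem: optfilt}. The only cosmetic difference is that the paper first substitutes nature's best response $h=-\alpha^{-1}\sigma\delta\sqrt{\mu}$ into the payoff and then evaluates at constant $\mu$, whereas you keep $h$ free and minimize the resulting quadratic at the end; the computations and the martingale step are identical.
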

\begin{proof}\renewcommand\qedsymbol{} 
By replacing nature's best response $h = -\alpha^{-1}\sigma\delta \sqrt{\mu}$ in \eqref{eq: payoffrep}, one gets the following payoff representation:
\begin{equation}
\label{appeq: vsuprep}
 	v(p)=\sup_\mu \BE^\mu\left[\delta \int_0^\infty e^{-\delta t}\left((1-\mu_t)r + \mu_t m(p_t^\mu)-\mu_t\frac{\sigma^2\delta}{2\alpha}\right)\d t\right],
\end{equation}
where $\BE^\mu$ and $p_t^\mu$ are resp. the probability measure and the posterior probability obtained from $h = -\alpha^{-1}\sigma\delta \sqrt{\mu}$. Furthermore, using the local-martingale property of $m(p_t^h)$, we get the following inequality for every $\mc{G}_0$-measurable control process, i.e $\mu_t \in \mc{G}_0$ for all $t \in \BR_+$, and hence $\mu_t \equiv \mu$ (up to evanescence):
\begin{equation}
	\begin{split}
		v(p)  &= \sup_{\mu \in \mc{U}} \BE^\mu\left[\delta \int_0^\infty e^{-\delta t}\left((1-\mu_t)r + \mu_t m(p_t)-\mu_t\frac{\sigma^2\delta}{2\alpha}\right)\d t\right] \\
		&\geq \sup_{\mu \in \mc{G}_0}\left\{\BE^\mu\left[\delta \int_0^\infty e^{-\delta t}\left((1-\mu)r + \mu m(p_0)-\mu\frac{\sigma^2\delta}{2\alpha}\right)\d t\right]+ \BE^\mu\left[\delta \int_0^\infty e^{-\delta t} \mu x^{h}_t \d t\right]\right\}
	\end{split}	
\end{equation}
Here $x^h$ is the local-martingale part of $m(p^h)$ resulted from lemma \ref{eq: beliefmg}. Having set $\mu \in \mc{G}_0$, the expectation of the second term vanishes due to the $\BP^h$-local-martingale property of $x^h$ and using the dominated convergence theorem for approximating the infinite horizon integral with finite counterparts. This proves the lower bound on $v(p)$.$\parallel$
\end{proof}

\begin{lemma}
Let $\mc{S}_i$ be subset of $[0,1]$ where the DM optimally chooses the $i$-th project if $p \in \mc{S}_i$, where $i \in \{1,2\}$. Then the value function is convex restricted to each of these subsets.
\end{lemma}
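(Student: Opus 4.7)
The plan is to split the domain into the two regions $\mathcal{S}_1$ and $\mathcal{S}_2$ and exploit the variational relation \eqref{eq: varhjb} on each.

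On $\mathcal{S}_1$, where the DM finds it optimal to pull only the safe arm, the value function collapses to the constant $v(p)=r$, since the flow payoff is $r$ forever and the posterior does not evolve. A constant function is trivially convex, so there is nothing to prove on $\mathcal{S}_1$.

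On $\mathcal{S}_2$, the exploration branch of \eqref{eq: varhjb} is active, so
\begin{equation}
v(p) = m(p) - \frac{\sigma^2 \delta}{2\alpha} + \frac{1}{2\delta}\Phi(p)\, v''(p).
\end{equation}
On the interior of $\mathcal{S}_2 \cap (0,1)$, where $\Phi(p) > 0$, this is a linear second-order ODE with strictly positive leading coefficient, so standard elliptic/ODE regularity upgrades the viscosity solution to a classical $C^2$ solution there; $v''$ is well defined in the classical sense. Rearranging,
\begin{equation}
\frac{1}{2\delta}\Phi(p)\, v''(p) \;=\; v(p) - \left(m(p) - \frac{\sigma^2\delta}{2\alpha}\right).
\end{equation}
By the preceding lemma the right-hand side is nonnegative, and $\Phi(p)>0$, so $v''(p) \geq 0$ on the interior of $\mathcal{S}_2$. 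Convexity on all of $\mathcal{S}_2$ then follows by continuity of $v$.

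The main obstacle is the regularity caveat: globally $v$ is only known to be a Lipschitz viscosity solution, and it may fail to be $C^2$ at the junction between $\mathcal{S}_1$ and $\mathcal{S}_2$ or at the degenerate boundary points $p \in \{0,1\}$ where $\Phi(p)=0$. The key observation that sidesteps this is that on the \emph{interior} of each region the problem reduces to a classical object (a constant on $\mathcal{S}_1$, a nondegenerate linear ODE on $\mathcal{S}_2$), so the argument never needs to invoke a second derivative at a non-smooth point. The lower bound $v(p) \geq m(p) - \sigma^2\delta/(2\alpha)$ from the previous lemma does all the heavy lifting in the $\mathcal{S}_2$ case; everything else is bookkeeping.
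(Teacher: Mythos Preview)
Your proof is correct and follows essentially the same approach as the paper: both show $v\equiv r$ on $\mathcal{S}_1$, and on $\mathcal{S}_2$ both rearrange the active branch of the HJB equation and invoke the lower bound $v(p)\geq m(p)-\sigma^2\delta/(2\alpha)$ from the preceding lemma to conclude $v''(p)\geq 0$. Your added remarks on regularity (upgrading the viscosity solution to a classical one on the nondegenerate interior of $\mathcal{S}_2$) are a welcome clarification that the paper omits.
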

\begin{proof}\renewcommand\qedsymbol{} 
On $\mc{S}_1$ the value function is identical to $r$, and hence is convex. On $\mc{S}_2$ the DM chooses the second arm and $\mu=1$, hence
\begin{equation}
	v(p) = m(p)-\frac{\sigma^2\delta}{2\alpha}+\frac{1}{2\delta} \Phi(p)v''(p),	
\end{equation}
which implies that
\begin{equation}
	\frac{1}{2\delta} \Phi(p)v''(p) =v(p)-m(p)+\frac{\sigma^2\delta}{2\alpha} \geq \left(m(p)-\frac{\sigma^2\delta}{2\alpha}\right)-m(p)+\frac{\sigma^2\delta}{2\alpha} =0.
\end{equation}
Therefore, $v''(p)\geq 0$ and hence the restriction of $v$ onto $\mc{S}_2$ is also convex.$\parallel$
\end{proof}
\begin{lemma}
The subsets $\mc{S}_1$ and $\mc{S}_2$ are connected subsets of $[0,1]$.
\end{lemma}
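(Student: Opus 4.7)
My plan is to deduce the interval structure of $\mc{S}_1$ and $\mc{S}_2$ from the piecewise convexity of $v$ established in the previous lemma, combined with the universal lower bound $v \geq r$. I will show that $\mc{S}_1$ is a closed initial segment $[0,\bar{p}]$, from which $\mc{S}_2 = [\bar{p},1]$ follows as its complementary terminal segment with possible overlap at the indifference point $\bar{p}$.

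The set $\mc{S}_1 = \{p : v(p) = r\}$ is closed by continuity of $v$, and it contains $0$: the diffusion coefficient in \eqref{eq: beliefmg} vanishes at $p = 0$, so the posterior is absorbed there under any strategy, making $m(p_t^h) \equiv \unl{\theta} \leq r$; combining this with the representation \eqref{appeq: vsuprep} and the lower bound $v \geq r$ forces $v(0) = r$. The crucial step is to rule out interior gaps in $\mc{S}_1$. Suppose for contradiction that $a < b$ both belong to $\mc{S}_1$ while $(a,b) \cap \mc{S}_1 = \emptyset$. Then $(a,b) \subseteq \mc{S}_2$ since $\mc{S}_1 \cup \mc{S}_2 = [0,1]$, and the previous lemma makes $v$ convex on $(a,b)$, hence convex on $[a,b]$ by continuity. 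With $v(a) = v(b) = r$, convexity forces $v(p) \leq r$ throughout $[a,b]$; combined with $v \geq r$ this yields $v \equiv r$ on $[a,b]$, contradicting $(a,b) \cap \mc{S}_1 = \emptyset$. Being closed, containing $0$, and admitting no interior gap, the set $\mc{S}_1$ therefore equals $[0, \bar{p}]$ with $\bar{p} := \sup \mc{S}_1$.

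For $\mc{S}_2$: on $[0,\bar{p}]$ we have $v \equiv r$, hence $v'' \equiv 0$ there, so the risky-arm value $m(p) - \tfrac{\sigma^2\delta}{2\alpha} + \tfrac{1}{2\delta}\Phi(p) v''(p)$ reduces to $m(p) - \tfrac{\sigma^2\delta}{2\alpha}$, which is strictly increasing in $p$ and equals $r$ at $p=\bar{p}$ by the value-matching condition. Consequently on $[0,\bar{p})$ the risky arm is strictly suboptimal, so $\mc{S}_2 \cap [0,\bar{p}) = \emptyset$; for $p > \bar{p}$, $v(p) > r$ forces the risky arm to attain the supremum in the variational equation \eqref{eq: varhjb}, so $p \in \mc{S}_2$; and $\bar{p} \in \mc{S}_2$ by indifference. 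Hence $\mc{S}_2 = [\bar{p},1]$, an interval. The principal technical obstacle is the interior-gap argument, which hinges on transferring the convexity of $v$ from the open region $(a,b) \subseteq \mc{S}_2$ to its closed hull $[a,b]$ via continuity and then invoking the elementary fact that a convex function on a compact interval attains its maximum at the endpoints; once this is set up, the interval structure of both sets follows.
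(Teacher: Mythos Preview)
Your argument for $\mc{S}_1$ is correct but proceeds by a genuinely different mechanism than the paper's. The paper never invokes the piecewise convexity from the preceding lemma here; instead it appeals to \emph{monotonicity} of $v$ (read off from \eqref{appeq: vsuprep}): if two boundary points $b_1<a_2$ satisfy $v(b_1)=v(a_2)=r$ and $v$ is non-decreasing with $v\geq r$, then $v\equiv r$ on $[b_1,a_2]$. You instead trap a putative gap $(a,b)\subseteq\mc{S}_2$, use convexity there to get $v\leq r$ on $[a,b]$, and combine with $v\geq r$. Both are one-line contradictions; yours feeds more naturally off the immediately prior lemma, while the paper's monotonicity device is self-contained and is reused symmetrically for the $\mc{S}_2$ part.

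There is one slip in your $\mc{S}_2$ paragraph. You assert that the risky-arm value $m(p)-\tfrac{\sigma^2\delta}{2\alpha}$ (computed with $v''=0$ on $[0,\bar p]$) ``equals $r$ at $p=\bar p$ by the value-matching condition''. Value-matching says $v(\bar p)=r$, but on the exploration side the second-order term $\tfrac{1}{2\delta}\Phi(\bar p)v''(\bar p^+)$ is strictly positive, so in fact $m(\bar p)-\tfrac{\sigma^2\delta}{2\alpha}<r$ (equivalently $\bar p<\eta$, as the later closed form confirms). The conclusion you want still holds with a small repair: on $[0,\bar p]$ the HJB with $v''=0$ forces $m(p)-\tfrac{\sigma^2\delta}{2\alpha}\leq r$, and strict monotonicity of $m$ then gives strict inequality on $[0,\bar p)$; hence the risky arm is strictly suboptimal there and $\mc{S}_2=[\bar p,1]$.
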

\begin{proof}\renewcommand\qedsymbol{} First note that $[0,1]=\mc{S}_1 \cup \mc{S}_2$, therefore the case of one subset being the empty set and the other being the whole unit interval trivially passes the lemma. Now assume both subsets are non-empty, and suppose $\mc{S}_1$ is not connected. Therefore, it must contain two disjoint open intervals, say $(a_1,b_1)$ and $(a_2,b_2)$, such that $b_1 < a_2$. This means that $[b_1,a_2]\subset \mc{S}_2$. The continuity must holds at the boundaries, namely $v(b_1)=v(a_2)=r$, otherwise there appears an \textit{arbitrage} opportunity for the DM and she could improve her strategy subsets, $\mc{S}_1$ and $\mc{S}_2$, so as to strictly be better off. Also, one can easily confirm from \eqref{appeq: vsuprep} that $v(\cdot)$ is a non-decreasing function in $p$. Since $v(\cdot)$ is always greater than or equal to $r$, then $v\equiv r$ on the entire $[b_1,a_2]$. This means essentially $[b_1,a_2] \subset \mc{S}_1$ that violates the initial assumption on $\mc{S}_1$. Therefore, $\mc{S}_1$ must be a connected subset of $[0,1]$. We use the proof by contradiction again to show $\mc{S}_2$ is connected as well. Suppose it is not, then it contains two disjoint open sets, say $(c_1,d_1)$ and $(c_2,d_2)$ such that $d_1 < c_2$. Note that at the boundary points the continuity must hold --- precisely to rule out the arbitrage ---  that means $v(d_1)=v(c_2)=r$. This means either $v \equiv r$ on $(c_1,d_1)$, which then one should include this interval in $\mc{S}_1$, or there exists some point $z \in (c_1,d_1)$ such that $v(z) > r$. This violates the non-decreasingness of $v$, and hence concludes the proof.$\parallel$
\end{proof}

The existence of cut-off strategy now falls out of the connectedness of $\mc{S}_1$ and $\mc{S}_2$ from previous lemma and monotonicity of $v(\cdot)$. It is thus left to prove the global convexity of $v(\cdot)$. For this denote the cut-off point by $\bar{p}$, and note that $\mc{S}_1=[0,\bar{p}]$ and $\mc{S}_2=(\bar{p},1]$.\footnote{It is not important whether $\bar{p}$ belongs to $\mc{S}_1$ or $\mc{S}_2$, since essentially the DM is indifferent between two arms when her belief is $\bar{p}$. However, since we laid out the HJB equation on $\mc{S}_2$, it is preferred to have an open set as the domain of the differential equation.} So far, we know that $v$ is separately convex on $\mc{S}_1$ and $\mc{S}_2$. To show that convexity is preserved on the whole region $[0,1]$, we pick the arbitrary points $p_1 \in \mc{S}_1$ and $p_2 \in \mc{S}_2$ and an arbitrary mixing weight $\xi \in (0,1)$. Define $p_\xi:= \xi p_2+(1-\xi)p_1$. If $p_\xi \in \mc{S}_1$, then $\xi v(p_2)+(1-\xi)v(p_1)$ is clearly greater than or equal to $v(p_\xi)=r$. Now suppose $p_\xi \in \mc{S}_2$, then
\begin{equation}
	\xi v(p_2)+(1-\xi)v(p_1)=\xi v(p_2)+(1-\xi)v(\bar{p})
 	\geq v\left(\xi p_2 + (1-\xi)\bar{p}\right) \geq v(p_\xi)
\end{equation}
where for the first inequality we used the convexity of $v$ on $\mc{S}_2$, and for the second one we used the monotonicity of $v$ and the fact that $p_1 \leq \bar{p}$. This concludes the global convexity of $v$, and hence the proof the theorem.\qed
\subsection{Optimal constants for value function with unambiguous information source}
\label{subsec: constants_der}
The following list is the set of all boundary conditions required for the DM's best-responding:
\begin{equation}
	\begin{split}
		&\text{(value-matching)}: r+c_1\tilde{p}^{\lambda_1}(1-\tilde{p})^{1-\lambda_1}=m(\tilde{p})-\frac{\sigma^2\delta}{2\alpha}+c_2\tilde{p}^{1-\lambda_2}(1-\tilde{p})^{\lambda_2}\\
		&\text{(smooth-pasting)}: c_1\left(\frac{\lambda_1}{\tilde{p}}-\frac{1-\lambda_1}{1-\tilde{p}}\right)\tilde{p}^{\lambda_1}(1-\tilde{p})^{1-\lambda_1}\\
		&\hspace{200pt}=\left(\overline{\theta}-\underline{\theta}\right)+c_2\left(\frac{1-\lambda_2}{\tilde{p}}-\frac{\lambda_2}{1-\tilde{p}}\right)\tilde{p}^{1-\lambda_2}(1-\tilde{p})^{\lambda_2}
	\end{split}
\end{equation}
We still need a third condition to determine all unknown variables. Note that,
\begin{align}
		\d p =\left\{ \begin{array}{lr}
		\sqrt{2}p(1-p)\left[\varphi(\sigma)\d \overline{B}+\varphi(\gamma) \d \overline{W}\right] & \text{ for } p  \geq \tilde{p}\\
		\sqrt{2}p(1-p)\varphi(\gamma)\d \overline{W} & \text{ for } p < \tilde{p}
		\end{array}\right.
\end{align}
Let us call the conjectured value function $\tilde{v}$ on $\left[0,\tilde{p}\right)$ by $\tilde{v}_-$ and on $\left(\tilde{p},1\right]$ by $\tilde{v}_+$. The threshold argument means to experiment for $p > \tilde{p}$ and to stop on $p < \tilde{p}$. Now suppose the agent instead of experimenting at $\tilde{p}$ stops for a period of $\Delta t$, in which $\Delta p \approx \sqrt{2}p(1-p)\varphi(\gamma)\sqrt{\Delta t}$, because the Bayesian formulae in the experimentation regime no longer applies. Then, the net gain from this deviation at $p =\tilde{p}$ must be negative if $\tilde{p}$ is the optimal cut-off point. The following variational analysis implies the gain for such a deviation:
\begin{equation}
	\begin{split}
		&\delta r \Delta t +\left(1-\delta \Delta t\right)\left[\frac{1}{2}v_-(p-\Delta p)+\frac{1}{2}v_+(p+\Delta p)\right]-v(p)\\
		&\approx \delta r \Delta t-v(p)\\
		& + \left(1-\delta \Delta t\right)\left\{ \frac{1}{2}\left[v_-(p)-(\Delta p) v'_-(p)+\frac{1}{2}(\Delta p)^2 v''_-(p)\right]\right.\\
		&\hspace{80pt}\left.+\frac{1}{2}\left[v_+(p)+(\Delta p) v'_+(p)+\frac{1}{2}(\Delta p)^2 v''_+(p)\right]\right\}\\
		&=\frac{1}{2}(1-\tilde{p})^2 \tilde{p}^2\varphi(\gamma)^2\left(\tilde{v}''_-(\tilde{p})+v''_+(\tilde{p})\right)\Delta t-\delta\left(\tilde{v}(\title{p})-r\right)\Delta t \leq 0,
	\end{split}
\end{equation}
therefore, 
\begin{equation}
	\begin{split}
		\frac{1}{2}(1-\tilde{p})^2 \tilde{p}^2\varphi(\gamma)^2 \left(\tilde{v}''_-(\tilde{p})+v''_+(\tilde{p})\right) &\leq \delta\left(\tilde{v}(\title{p})-r\right)=\varphi(\gamma)^2\tilde{p}^2(1-\tilde{p})^2\tilde{v}''_-(\tilde{p})\\
		 \Rightarrow \frac{1}{2}\left(\tilde{v}''_-(\tilde{p})+v''_+(\tilde{p})\right) &\leq \tilde{v}''_-(\tilde{p}).
	\end{split}
\end{equation}
By a mirror argument one can see
\begin{equation}
	\frac{1}{2}\left(\tilde{v}''_-(\tilde{p})+v''_+(\tilde{p})\right) \leq \tilde{v}''_+(\tilde{p}).
\end{equation}
Consequently it holds that $\tilde{v}''_-(\tilde{p})=\tilde{v}''_+(\tilde{p})$, leading to the super-contact condition:
\begin{equation}
	\frac{c_1\delta}{\tilde{p}^2(1-\tilde{p})^2 \varphi(\gamma)^2}\tilde{p}^{\lambda_1}(1-\tilde{p})^{1-\lambda_1}= \frac{c_2 \delta }{\tilde{p}^2(1-\tilde{p})^2 \left(\varphi(\sigma)^2+\varphi(\gamma)^2\right)}\tilde{p}^{1-\lambda_2}(1-\tilde{p})^{\lambda_2}
\end{equation}
Therefore, the value of constants $(c_1,c_2)$ are determined in terms of the cut-off point $\tilde{p}$:
\begin{equation}
	 c_1=\frac{\frac{\sigma^2}{\gamma^2} \left(r-m(\tilde{p})+\frac{\sigma^2\delta}{2\alpha}\right)}{\tilde{p}^{\lambda_1}(1-\tilde{p})^{1-\lambda_1}}, ~ c_2 = \frac{\left(1+\frac{\sigma^2}{\gamma^2}\right) \left(r-m(\tilde{p})+\frac{\sigma^2\delta}{2\alpha}\right)}{\tilde{p}^{1-\lambda_2}(1-\tilde{p})^{\lambda_2}}
\end{equation}


\subsection{Proof of proposition \ref{prop: Compar_cutoff}}
The associated equations for the cut-off probabilities in each case are expressed in \eqref{eq: pbar} and \eqref{eq: ptilde}. First, we show that $\Lambda \geq \lambda$ for any combination of variables. One can easily check from definition of $\Lambda$ and $\lambda$ that $\Lambda \geq \lambda$ iff
\begin{equation}
\label{appeq: Lgl}
	\frac{\sigma^2}{\gamma^2}\sqrt{1+\beta \gamma^2}+\left(1+\frac{\sigma^2}{\gamma^2}\right)\sqrt{1+\beta \frac{\sigma^2\gamma^2}{\sigma^2+\gamma^2}}\geq \sqrt{1+\beta \sigma^2},	
\end{equation}
in that we denote $\beta:=8\delta/ (\overline{\theta}-\underline{\theta})^2$. Because $\gamma^2 \geq \sigma^2\gamma^2/(\sigma^2+\gamma^2)$ the \textit{lhs} is larger than
\begin{equation}
	\left(1+\frac{2\sigma^2}{\gamma^2}\right)\sqrt{1+\beta \frac{\sigma^2\gamma^2}{\sigma^2+\gamma^2}}.	
\end{equation}
Therefore, a sufficient condition for \eqref{appeq: Lgl} to hold is $\left(1+\frac{2\sigma^2}{\gamma^2}\right)^2 \geq \frac{\left(1+\frac{\sigma^2}{\gamma^2}\right)\left(1+\beta \sigma^2\right)}{1+\frac{\sigma^2}{\gamma^2}+\beta \sigma^2}$, which holds because
\begin{equation}
	\left(1+\frac{2\sigma^2}{\gamma^2}\right) \geq 1\geq \frac{1+\beta \sigma^2}{1+\frac{\sigma^2}{\gamma^2}+\beta \sigma^2}.	
\end{equation}
Now we verify that $\tilde{p} \geq \bar{p}$. For this, note that $\tilde{p}=0$ if $\Lambda \leq \eta$, in that case $\lambda \leq \eta$ which implies $\bar{p}=0$. For the region $\lambda > \eta$ both cut-offs are positive. They are equal to one if $\eta \geq 1$ and are strictly smaller than one if $\eta < 1$, in that case $\tilde{p} \geq \bar{p}$ because $\Lambda \geq \lambda$.\qed

\bibliographystyle{apalike}
\bibliography{ref}

\end{document}